\newcommand{\Oh}[1]{\mathcal{O}\!\lrParents{#1}}
\newcommand{\lrBrackets}[1]{\left[ #1 \right]}
\newcommand{\lrCurlyBrackets}[1]{\left\{ #1 \right\}}
\newcommand{\OhSmall}[1]{
  \mathchoice
    {{\scriptstyle\mathcal{O}}}% \displaystyle
    {{\scriptstyle\mathcal{O}}}% \textstyle
    {{\scriptscriptstyle\mathcal{O}}}% \scriptstyle
    {\scalebox{.6}{$\scriptscriptstyle\mathcal{O}$}}\!\lrParents{#1}%\scriptscriptstyle
  }
\newcommand{\OhMega}[1]{\Omega\!\lrParents{#1}}
\newcommand{\OhTheta}[1]{\Theta\!\lrParents{#1}}
\newcommand{\lrParents}[1]{\left( #1 \right)}
\newcommand{\lrSize}[1]{\left| #1 \right|}
\newcommand{\dist}[2]{d\!\left( #1, #2 \right)}
\newcommand{\f}[1]{f\!\left( #1 \right)}
\newcommand{\logn}{\log{n}}
\newcommand{\R}{\mathbf{R}}
\newcommand{\Tri}[3]{\triangle #1#2#3}
\newcommand{\TriS}[1]{\triangle #1}
\newcommand{\sens}[1]{\sigma\!\lrParents{#1}}
\newcommand{\sensO}[1]{\sigma_O\!\lrParents{#1}}
\newcommand{\hp}[1]{\overrightarrow{#1}}
\newcommand{\poly}[1]{\pentagon\!\lrParents{#1}}
\newtheorem{definition}{Definition}[section]
\newtheorem{case}{Case}
\newenvironment{customlemma}[1]
  {\innercustomthm}
  {\endinnercustomthm}
\title{Fast Area-Weighted Peeling of Convex Hulls for Outlier Detection\footnote{This work is supported in part by Independent Research Fund Denmark grant~9131-00113B and a fellowship from the Department of Computer Science at UC Irvine.}}
\author{Vinesh Sridhar\thanks{University of California, Irvine, {\tt vineshs1@uci.edu}}
    \and
    Rolf Svenning\thanks{The Department of Computer Science, Aarhus University, {\tt rolfsvenning@cs.au.dk}}}
\begin{document}
\thispagestyle{empty}
\maketitle

\begin{abstract}
We present a novel 2D convex hull peeling algorithm for outlier detection, which repeatedly removes the point on the hull that decreases the hull's area the most. To find $k$ outliers among $n$ points, one simply peels $k$ points.
The algorithm is an efficient \emph{heuristic} for \emph{exact} methods, which find the $k$ points whose removal together results in the smallest convex hull.
Our algorithm runs in $\Oh{n\logn}$ time using $\Oh{n}$ space for any choice of $k$. 
This is a significant speedup compared to the fastest exact algorithms, which run in $\Oh{n^2\logn + (n - k)^3}$ time using $\Oh{n\logn + (n-k)^3}$ space by Eppstein et al.~\cite{1992_Eppstein_k3n2, 1992_eppstein_kn3}, and $\Oh{n\logn + \binom{4k}{2k}(3k)^k n}$ time by Atanassov et al.~\cite{2009_Atanassov_exact_small_k}. 
Existing heuristic peeling approaches are not area-based. Instead, an approach by Harsh et al.~\cite{2016_heuristic_peeling_distance} repeatedly removes the point furthest from the mean using various distance metrics and runs in $\Oh{n\logn + kn}$ time.
Other approaches greedily peel one convex layer at a time~\cite{hugg2006_layer_depth, aloupis2006_layer_depth, peeling_layers1, peeling_layers2}, which is efficient when using an $\Oh{n\logn}$ time algorithm by Chazelle~\cite{1985_chazelle_convex_layers} to compute the convex layers. However, in many cases this fails to recover outliers.
For most values of $n$ and $k$, our approach is the fastest and first practical choice for finding outliers based on minimizing the area of the convex hull. 
Our algorithm also generalizes to other objectives such as perimeter.
%option to find area-based convex hull outliers. 
%Of independent interest, we show as part of our analysis that for any convex hull peeling process then the total number of times points become active is $\Oh{n}$.

%Additionally, plotting the area of the convex hull as points are peeled is useful for exploratory data analysis, where the elbow method can provide a good estimate for the number of outliers.

%. Instead we were motivated to find a fast algorithm for area-based convex hull outlier detection, and for even modest input sizes our algorithm is the first practical solution.
%The best algorithm by Eppstein for the second approach, which we characterize as \emph{exact} compared to our \emph{greedy}, runs in $\Oh{k^4n}$ time and is motivated as ours by the idea that the area of the resulting convex hull is a way of qualifying which points are outliers. \rolf[]{Maybe the part from here should be moved to the paper text}If the initial point-set has $l$ convex layers, our algorithm greedily peels $k$ points in $\Oh{n\logn + k\myMinTwo{k}{l}\logn}$ time. Even for a modest $k = \sqrt{n}$ number of outliers, our greedy algorithm runs in $\Oh{n\logn}$ which is a substantial improvement over $\Oh{n^3}$ time of the exact algorithm.
\end{abstract}

\section{Introduction}
% Motivation: data analysis, pruning -> outliers -> statistical methods -> "non-statistical" such as convex hull peeling
% What we do in our paper: improve quality of/complement existing convex hull peeling methods, exact $k$ removal is what we actually want to do and our approach is a greedy version of it that can be computed efficiently, combining different outlier detection strategies is better
% Challenges: motivate amortized analysis, (maybe how to define outliers)
When performing data analysis, a critical first step is to identify outliers in the data. This has applications in data exploration, clustering, and statistical analysis~\cite{data_exploration_zuur2010protocol, clustering_DAVE1991657, statistical_analysis_kwak2017statistical}.
Typical methods of outlier detection such as Grubbs' test~\cite {grubbs1949sample} are based in statistics and require strong assumptions about the distribution from which the sample is taken. 
These are known as parametric outlier detection tests.
If the sample size is too small or the distribution assumptions are incorrect, parametric tests can produce misleading results. 
For these reasons, non-parametric complementary approaches based in computation geometry have emerged. 
Our work follows this line of research and is based on the fundamental notion of a convex hull. 
For a set of points $P$, the convex hull is the smallest convex set containing $P$~\cite{de2000computational}. 

There are numerous definitions of outliers~\cite{knorr1999finding, ramaswamy2000efficient, angiulli2002fast}, but a general theme is that points without many close neighbors are likely to be outliers. 
As such, these outlying points tend to have a large effect on the shape of the convex hull. Prior work has applied this insight in different ways to identify possible outliers, such as removing points from the convex hull to minimize its diameter~\cite{aggarwal1989fining, eppstein1994_exact_smallest_diameter}, its perimeter~\cite{dobkin1983finding}, or its area~\cite{1992_eppstein_kn3, 1992_Eppstein_k3n2}. 
Motivated by the last category, we will consider likely outliers to be points whose removal causes the area of the convex hull to shrink the most. 
We propose a greedy algorithm that repeatedly removes the point $p \in P$ such that the area of $P$'s convex hull decreases the most. We call the amount the area would decrease if point $p$ is removed its \emph{sensitivity} $\sens{p}$. 
The removed point is guaranteed to be on the convex hull, and such an algorithm is known as a convex hull \emph{peeling} algorithm~\cite{peeling_layers1, peeling_layers2}. 
To find $k$ outliers, we peel $k$ points. 
%\rolf{Add plot for elbow method to estimate $k$ (better interpretability)}
%\begin{figure}
%    \centering
%    \includegraphics[scale=0.1]{figs/Elbow Method.png}
%    \caption{\vinesh{Elbow method}}
%    \label{fig:enter-label}
%\end{figure}
Our algorithm is conceptually simple, though it relies on the black-box use of a dynamic (or deletion-only) convex hull data structure~\cite{1992_hershberger_semi_dynamic_CH, 2002_brodal_rico_dynamic_CH}. %, and runs in $\Oh{n\logn}$ time for any $k$. 
%We focus on $\R^2$ for two reasons. 
%First, many geometric algorithms and data structures are more efficient and simpler in $\R^2$ than in $\R^d$ for any $d \geq 2$. 
%Second, the number of facets which defines the sensitivity $\sens{p}$ becomes unwieldy for larger $d$. 
We assume that points are in general position. This assumption may be lifted using perturbation methods~\cite{mehlhorn2006reliable}.

\begin{figure}
    \centering
    \includegraphics[scale=0.23]{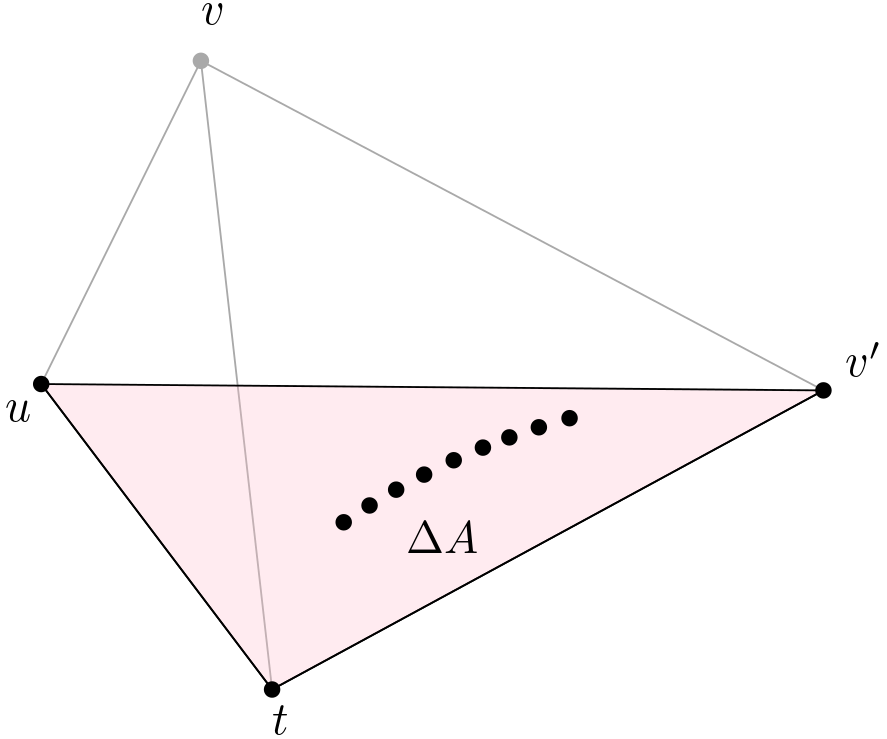}
    \caption{Here point $v$ was peeled from the convex hull and replaced by $v'$. The previous triangle $\Tri{t}{u}{v}$ for $u$ contained no points. 
    %Notice that the set of points $\Delta A$ is outside of $\Tri{t}{u}{v}$ and so could not have been active for $u$. 
    However, when $u$'s triangle becomes $\Tri{t}{u}{v'}$, the set of points $\Delta A$ affect the sensitivity $\sens{u}$ of $u$. The size of $\Delta A$ may be $\OhMega n$.
    }
    \label{fig:many-points}
\end{figure}

The main challenge is maintaining the sensitivities as points are peeled. When peeling a single point $v$, there may be $\OhMega{n}$ new points affecting the sensitivity $\sens{u}$ for a different point $u \neq v$, as in Figure~\ref{fig:many-points}. In that case, naively computing the new sensitivity $\sens{u}$ would take $\OhMega{n}$ time. Nevertheless, we show that our algorithm runs in $\Oh{n \logn}$ time for any $1 \leq k \leq n$.

% In literature, there are three main definitions of outliers [13]:
%Outliers are objects with fewer than k neighbours in the database, where a neighbour is an object that is within a distance R [6].
%2. Outliers are the n objects presenting the highest distance values to their respective k th nearest neighbour (the k-NN definition) [5].
%3. Outliers are the n objects presenting the highest average distance to their respective k nearest neighbours [12].

% we consider 2D

\section{Related work}
% convex hull algorithms (static and dynamic)
% convex hull outlier strategies
%   1. peeling algorithms (do they finish one layer at a time?)
%   2. exact $k$ removal (what we actually want to do!)
% Machinery: convex layers, area half-plane/chord queries, amortized analysis

% The notion of onion decompositions first appears in the computational statistics literature [15], and several rather brute-force algorithms to compute it have been suggested (see [9] and the references therein). In the computational geometry community, Overmars and van Leeuwen [22] presented the first near-linear time algorithm, requiring O(n log2n) time. Chazelle [6] improved this to an optimal O(n log n) time algorithm. Nielsen [21] gave an output-sensitive algorithm to compute only the outermost k layers in O(n log hk) time, where hk is the number of vertices participating on the outermost k layers. In R3, Chan [5] described an O(n log6n) expected time algorithm https://arxiv.org/pdf/1302.5328.pdf

The two existing approaches for finding outliers based on the area of the convex hull took a more ideal approach. 
They considered finding the $k$ points (outliers) whose removal together causes the area of the convex hull to decrease the most. 
We call this a \emph{$k$-peel} and note that it always yields an area smaller or equal to that of performing $k$ individual $1$-peels. It is not hard to come up with examples where the difference in area between the two approaches is arbitrarily large such as in Figure~\ref{fig:our-bad-example}. 
Still, these examples are quite artificial and require that outliers have at least one other point close by. 
More importantly, these methods are combinatorial in nature, and much less efficient than our algorithm. 
The state-of-the-art algorithms for performing a \emph{$k$-peel} run in $\Oh{n^2\logn + (n - k)^3}$ time and $\Oh{n\logn + (n-k)^3}$ space by Eppstein~\cite{1992_Eppstein_k3n2, 1992_eppstein_kn3} and $\Oh{n\logn + \binom{4k}{2k}(3k)^k n}$ time by Atanassov et al.~\cite{2009_Atanassov_exact_small_k}. While excellent theoretical results, for most values of $1 \leq k \leq n$ and $n$, the running time of both of these algorithms is prohibitive for practical purposes. Our contribution is a fast and practical heuristic for these ideal approaches. There are also several results for finding the $k$ points minimizing other objectives such as the minimum diameter,   perimeter, or area-enclosing rectangle~\cite{eppstein1994_exact_smallest_diameter, segal1998_enclosing_rectangle_exact}. %see 

\begin{figure}
    \centering
    \includegraphics[scale=0.1]{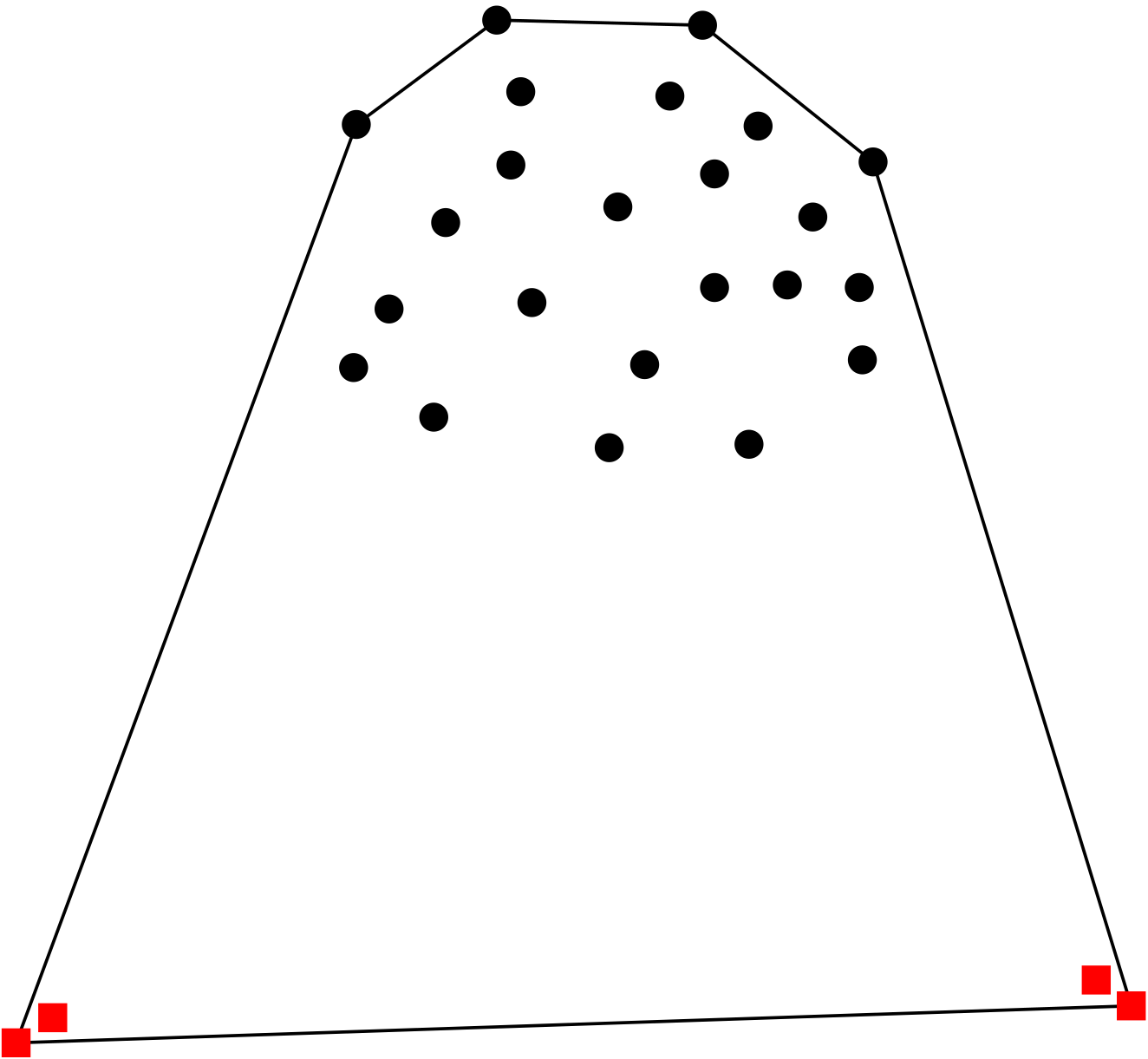}
    \caption{This figure demonstrates the limitations of our heuristic weighted-peeling approach. Clearly, the red squares are outliers, but because there are two squares close-by, the sensitivity of the red squares is minimal. Thus, our algorithm may peel all the valid points before peeling the outlier squares. Note that two $k$-peels for $k=2$ would be sufficient to remove all outliers.}
    \label{fig:our-bad-example}
\end{figure}

Another convex hull peeling algorithm is presented in~\cite{2016_heuristic_peeling_distance}. Unlike in area-based peeling, they repeatedly remove the point furthest from the mean under various distance metrics. Letting $d$ be the time to compute the distance between two points, their algorithm runs in $\Oh{n\logn + knd}$ time, which is also significantly slower than our algorithm for most values of $k$. Since they maintain the mean of the remaining points during the peeling process, each peel takes $\OhTheta{n}$ time. %If one were to modify their method to not update the mean, allowing for an $\Oh{n\logn}$ time algorithm, it would be problematic since the mean is not a robust statistic. %Consider the natural situation in Figure~\ref{fig:motivating-example} where peeling without updating the mean fails to find many obvious outliers since the outliers pull the mean out of the distribution. 

Some depth-based outlier detection methods also use convex hulls. They compute a point set's convex layers, which can be defined by iteratively computing $P \setminus CH(P)$ and are computable in $\Oh{n\logn}$ time~\cite{1985_chazelle_convex_layers}. Here, points are deleted from the outermost-layer-in~\cite{hugg2006_layer_depth, aloupis2006_layer_depth, peeling_layers1, peeling_layers2}. While efficient, the natural example in Figure~\ref{fig:motivating-example} is a bad instance for this approach.

\begin{figure}
  \centering
  \includegraphics[scale=0.1]{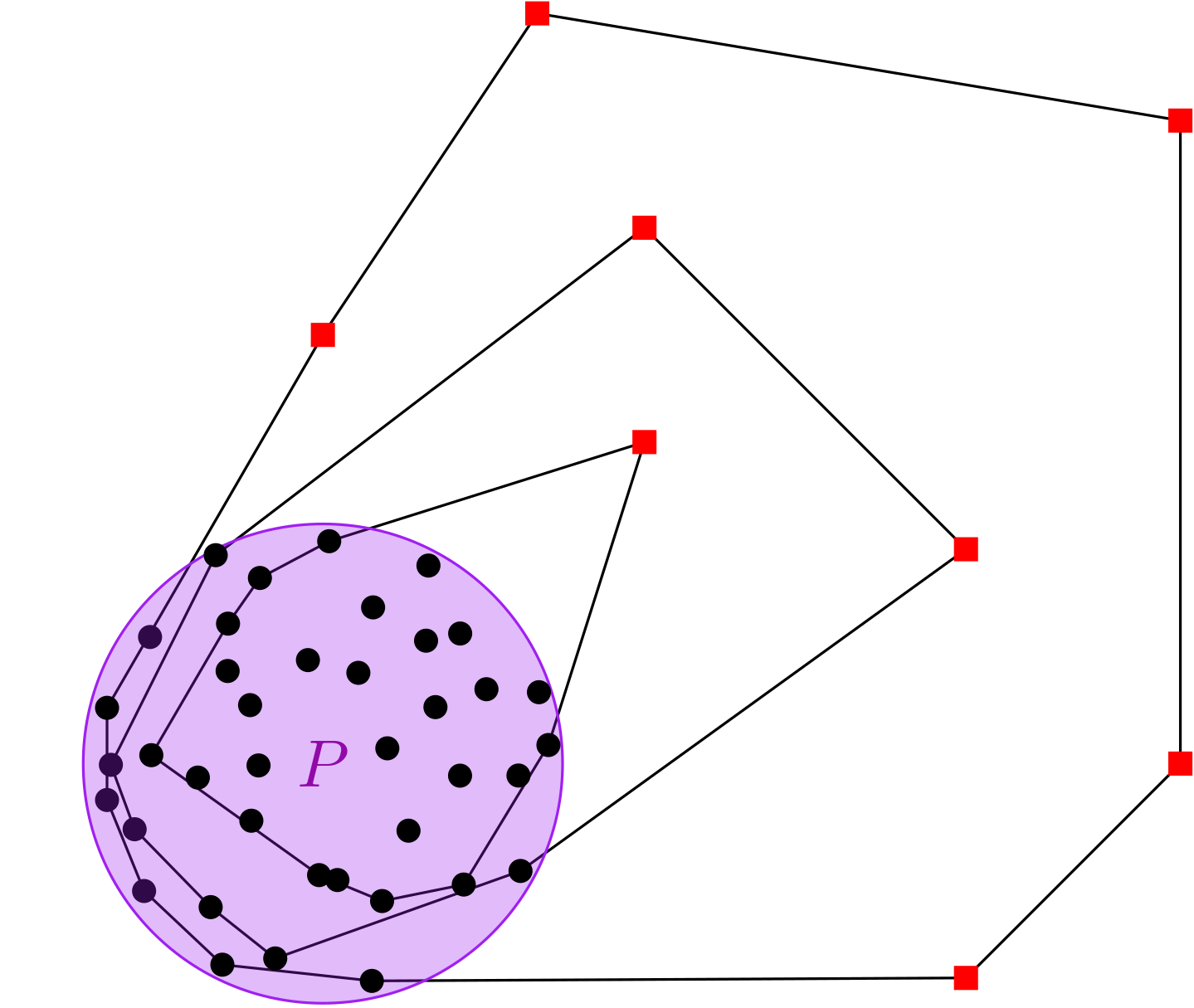}

  %\begin{subfigure}{0.33\textwidth}
    %\centering
    %\includegraphics[width=\textwidth]{figs/Motivating Example Mean.png}
    %\caption{Initial mean computation.}
    %\label{fig:motivating-example-mean}
  %\end{subfigure}
  \caption{
    This example shows points drawn uniformly from a target disk $P$. 
    Clearly, the outliers are the points marked as red squares. 
    It shows the downside of peeling based on depth since many points have to be peeled before reaching the outliers on the second and third layers.
    In particular, if there are $n$ points drawn uniformly from $P$, then its convex hull has expected size $\Oh{n^{1/3}}$~\cite{har2011expected}. 
    %In the right example, the mean of the points is marked with an `$\times$' the points 
    %must remove many of these points before it starts peeling the obvious red square outliers. Weighted peeling will immediately peel the outlying points. In the second example, an initial mean, marked with an `$\times$', has been computed. Notice that points that are clearly outliers have been interpreted to be correct because the mean has been pulled too far from the actual center of the data. Likewise, some points within the disk are now erroneously interpreted as outliers (squares).
  }
  \label{fig:motivating-example}
\end{figure}

\section{Results}
The main result of our paper is Theorem~\ref{thm:main_peeling}, that there exists an algorithm for efficiently performing area-weighted-peeling.

\begin{theorem}\label{thm:main_peeling}
    Given $n$ points in 2D, Algorithm~\ref{alg:area_weighted_peeling} performs area-weighted-peeling, repeatedly removing the point from the convex hull which causes its area to decrease the most, in $\Oh{n\logn}$ time. %,greedily removing the point from the convex hull which causes its area to decrease the most,
\end{theorem}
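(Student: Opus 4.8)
The plan is to maintain the convex hull of the surviving points in a balanced binary search tree ordered around the hull, following the cited deletion-only hull structures~\cite{1992_hershberger_semi_dynamic_CH, 2002_brodal_rico_dynamic_CH}, so that a deletion is realized by \SplitEmpty and \SpliceEmpty operations in $\Oh{\logn}$ time, and deleting a hull vertex reports the chain of newly exposed points in $\Oh{\logn}$ amortized per structural change. Alongside the hull I would store, for every surviving hull vertex $z$, its sensitivity $\sens{z}$, and keep a priority queue (or augment the tree with a subtree-minimum of $\sigma$) so that the vertex of minimum sensitivity is extracted in $\Oh{\logn}$. The algorithm then repeats: extract the vertex $x$ of smallest $\sens{x}$, delete it, read off the exposed chain $a, c_1, \dots, c_m, b$ between its two neighbors $a$ and $b$, and update the sensitivities that changed.

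First I would pin down which sensitivities actually change when $x$ is deleted. I claim only $\Oh{1}$ of the already-present vertices are affected: the value $\sens{z}$ depends only on $z$, its two hull neighbors, and the points inside its defining triangle $\Tri{z^-}{z}{z^+}$, and deleting $x$ alters the neighbor set (or the enclosed points) only for the two neighbors $a$ and $b$ of $x$, whose triangles now involve $c_1$ and $c_m$ respectively; for any non-neighbor the neighbors are unchanged and its pocket is disjoint from the newly opened region. Every other change is the appearance of the brand-new chain vertices $c_1, \dots, c_m$, each of which needs a first sensitivity value. Since a surviving extreme point of a set stays extreme in every subset containing it, each point joins the hull at most once; charging each $c_i$ to that single event bounds the total number of sensitivity (re)computations over the whole run by $\Oh{n}$. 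Combined with the $\Oh{n}$ deletions, it therefore suffices to show that a single deletion, together with all the split/splice operations and sensitivity updates it triggers, runs in $\Oh{\logn}$ amortized time.

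The main obstacle is exactly the situation in Figure~\ref{fig:many-points}: when $a$'s neighbor changes from $x$ to $c_1$, the set $\Delta A$ of points that newly fall inside $a$'s triangle can have size $\OhMega{n}$, so recomputing $\sens{a}$ by enumerating those points is too slow. The idea to get around this is to never represent a pocket by its point list. Instead, beneath each hull vertex $z$ I would store the points hidden behind $z$ as a sub-hull subtree, augmented with enough aggregate information (in particular the area of the cap the hidden chain cuts off) that $\sens{z}$ equals the area of $\Tri{z^-}{z}{z^+}$ minus that cap area and is read in $\Oh{\logn}$. When $x$ is deleted, the top chain of the points hidden behind it is $c_1,\dots,c_m$; I would extract it with the tree's bridge-finding (hull-merge) search, \SpliceEmpty it into the main hull between $a$ and $b$, and then carve the remaining hidden points into the pockets of $c_1,\dots,c_m$ using \SplitEmpty and \SpliceEmpty, moving whole subtrees rather than individual points and maintaining the aggregate areas along the way. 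Each such operation costs $\Oh{\logn}$ and is charged either to the deletion of $x$ or to the one-time exposure of some $c_i$, so the per-deletion cost stays $\Oh{\logn}$ amortized.

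Putting the pieces together: the initial hull, the hidden sub-hulls with their cap areas, the sensitivities, and the priority queue are built in $\Oh{n\logn}$; each of the $\Oh{n}$ deletions performs $\Oh{1}$ priority-queue operations, a constant number of neighbor sensitivity recomputations, and a batch of split/splice operations whose cost telescopes to $\Oh{\logn}$ amortized by the charging above. Summing over all peels yields the claimed $\Oh{n\logn}$ running time, uniformly in $k$.
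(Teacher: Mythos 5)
Your high-level decomposition matches the paper in two places: only the two hull neighbors of the deleted vertex have their sensitivities changed among surviving hull points (the paper's Lemma~\ref{lem:combined}(4)), and each point is exposed on the hull at most once, so first-time sensitivity computations number $\Oh{n}$ (a consequence of layer monotonicity, Lemma~\ref{lem:combined}(3)). One small but real slip first: you repeatedly extract the vertex of \emph{minimum} sensitivity, whereas area-weighted peeling removes the point whose deletion shrinks the hull the \emph{most}, i.e., the maximum-sensitivity vertex; the paper uses a max priority queue.

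The substantive gap is in how you handle the situation of Figure~\ref{fig:many-points}. The paper's resolution is combinatorial, not data-structural: it proves (Lemmas~\ref{lem:still-in-triangle} and~\ref{lem:still-active}, hence Theorem~\ref{thm:bounding_active_points_amortized}) that once a point becomes active for a hull vertex it remains active for it until promoted to the first layer, and is active for at most three vertices at a time, so the \emph{total} number of activation events over all $n$ peels is at most $3n$. That makes it affordable to enumerate the $\delta_u$ newly active points explicitly and update the sensitivity by an incremental shoelace computation in $\OhTheta{\delta_u + \logn}$ time (Lemma~\ref{lem:update-sensitivities}); no subtree area aggregation is needed. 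Your route instead tries to make each deletion $\Oh{\logn}$ amortized by storing each pocket as an augmented sub-hull and moving whole subtrees, but the correctness of that splice rests on exactly the geometric facts you assume implicitly: (i) the points that must migrate from $x$'s pocket into $a$'s pocket form $\Oh{1}$ contiguous arcs of $x$'s pocket hull, so constantly many splits suffice; and (ii) $a$'s new active chain is its old chain with the migrated arc appended at one end (with at most one point dropped), since otherwise the spliced cap-area aggregate does not equal the new sensitivity. These contiguity and monotonicity statements are precisely the content of the paper's Lemmas~\ref{lem:still-in-triangle} and~\ref{lem:still-active} and must be proved, not assumed. You also leave open where a pocket's hull is replenished from when its points are promoted: the points hidden behind a vertex have many convex layers, and splitting a multi-layer structure along a line is not an $\Oh{\logn}$ operation. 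The paper sidesteps this by maintaining only $L^1$ and $L^2$ explicitly and answering extreme-point queries on a single shared dynamic hull of the remaining points, paying $\Oh{\logn}$ per point promoted, which again totals $\Oh{n\logn}$ by layer monotonicity. So your plan is not unsalvageable, but as written the per-deletion cost bound depends on unproven structural lemmas that constitute the actual core of the paper's argument.
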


%To prove Theorem~\ref{thm:main_peeling} we derive Theorem~\ref{thm:bounding_active_points_amortized} which we believe is of independent interest for convex hull peeling algorithms. This theorem bounds the total number of times points becomes \emph{active} in any 2D convex hull peeling process to $\Oh{n}$. 

To prove Theorem~\ref{thm:main_peeling}, we derive Theorem~\ref{thm:bounding_active_points_amortized}, which bounds the total number of times points become \emph{active} in any 2D convex hull peeling process to $\Oh{n}$. 

%Our approach is based on the notion of \emph{active} points, which are the points not on the convex hull that affect the sensitivities.

\begin{definition}[Active Points]
    Let $(t, u, v)$ be consecutive points on the first layer in clockwise order. A point $p$ is active for $u$ if, upon deleting $u$ and restoring the first and second layers, $p$ moves to the first layer.
\end{definition}

Intuitively, the active points are the points not on the convex hull that affect the sensitivities. 
Note that the active points form a subset of the points on the second convex layer. We define $A(u)$ to be the set of active points for point $u$ in a given configuration. Furthermore, all points in $A(u)$ can be found by performing gift-wrapping starting from $u$'s counterclockwise neighbor $t$ while ignoring $u$. We use this ordering for the points in $A(u)$. In Theorem~\ref{thm:generalization}, we show that our algorithm generalizes to other objectives such as \emph{perimeter} where the sensitivity only depends on the points on the first layer and the active points.

% is $\Oh{n}$.  A point $p$ becomes active when it is on the second convex layer and is in the triangle $T_u$ defined by three consecutive points $prev(u), u, succ(u)$ on the first convex layer.  A point may also become active for a different triangle $T_v$ even if it is always active in $T_u$ for $v \neq v$. Note a point becomes active in a triangle $T_u$ when it starts affecting how much the area of the hull decreases when $u$ is peeled. The same is true for different measures such as perimeter.

\section{Machinery}

In this section, we describe some of the existing techniques we use. To efficiently calculate how much the hull shrinks when a point is peeled, we perform tangent queries from the neighbours of the peeled point to the second convex layer. % ($CH(P \setminus CH(P))$).
The tangents from a point $q$ to a convex polygon $L$ can be found in $\Oh{\logn}$ time both with~\cite{1981_overmars_maintenance_2D_convex_hull_tangent_with_seprating_line} and without~\cite{1995_kirkpatrick_Snoeyink_tangents_without_separating_line} a line separating $q$ and $L$. In our application, such a separating line is always available, and either approach can be used. 
Tangent queries require that $L$ is represented as an array or a balanced binary search tree of its vertices ordered (cyclically) as they appear on the perimeter of $L$. To allow efficient updates to $L$ we use a binary tree representation that is leaf-linked such that given a pointer to a vertex its successor/predecessor can be found in $\Oh{1}$ time. %We also maintain a doubly-linked-list on the elements as given by their inorder traversal, such that given a pointer to a vertex its the successor/predecessor can be found in $\Oh{1}$ time. %For example, by using the angle between a point and its successor as its key.

The convex layers of $n$ points can be computed in $\Oh{n\logn}$ time using an algorithm by Chazelle~\cite{1985_chazelle_convex_layers}. 
%We maintain the convex layers as points are peeled, again using tangent queries.
Given $l$ convex layers, after a single peel they can be restored in $\Oh{l\logn}$ time (Lemma 3.3~\cite{loffler2014unions}). %, since the number of convex layers only decreases as points are deleted. 
However, for our purposes we only need the 2 outermost layers for area calculations. As such, we explicitly maintain the two outermost layers $L^1$ and $L^2$, and we store all remaining points $P \setminus \lrCurlyBrackets{L^1 \cup L^2}$ in a \emph{center} convex hull. To restore $L^1$ we use tangent queries on $L^2$ as in~\cite{loffler2014unions}. To restore $L^2$ we use extreme point queries on the center convex hull which we maintain using a semi-dynamic~\cite{1992_hershberger_semi_dynamic_CH} or fully-dynamic~\cite{2002_brodal_rico_dynamic_CH} convex hull data structures supporting extreme point queries in worst case $\Oh{\logn}$ time and updates in amortized $\Oh{\logn}$ time. %which were developed following a long line of research.

\section{Area-Weighted-Peeling Algorithm}
In this section, we describe Algorithm~\ref{alg:area_weighted_peeling} in detail and show that its running time is $\Oh{n\logn}$. %The algorithm can be divided into two parts, the first is setting up the initial data structures and the second is running the peeling process.

\begin{figure}
    \centering
    \includegraphics[scale=0.193]{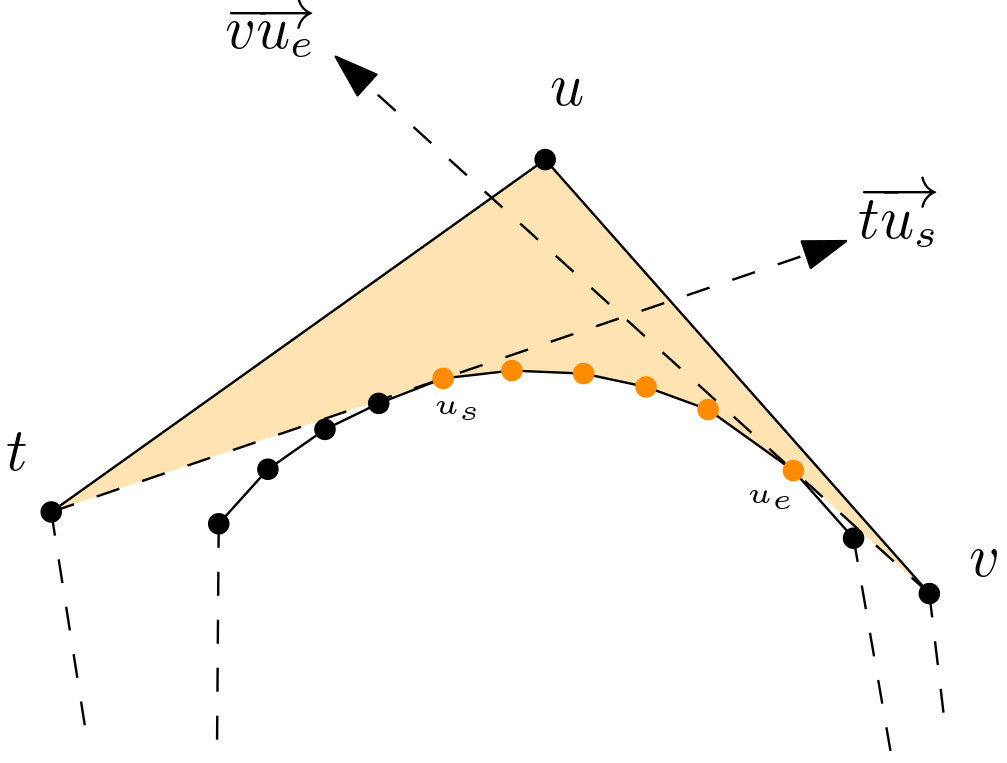}
    \caption{Using $u$'s neighbors, we can perform two tangent queries on $L^2$ to recover the first and last active point of $u$, labeled $u_s$ and $u_e$ respectively, in $\Oh{\log n}$ time. Because we represent $L^2$ as a leaf-linked tree, we can walk along $L^2$ to recover all points of $A(u)$. The shaded part of the figure represents $\sens u$.}
    \label{fig:tangent-query}
\end{figure}

% \begin{figure}[]
%   \centering
%   \begin{minipage}[b]{0.3\textwidth}
%     \includegraphics[width=.8\textwidth]{figs/Sensitivity.png}
%     \caption{The sensitivity of hull point $u$. Note how not all of the points within $u$'s triangle affect $\sens u$.}
%     \label{fig:sensitivity}
%   \end{minipage}
%   \hfill
%   \begin{minipage}[b]{0.3\textwidth}
%     \includegraphics[width=.8\textwidth]{figs/Tangent Queries.png}
%     \caption{Using $u$'s neighbors, we can perform two tangent queries on $L^2$ to recover the first and last active point of $u$, labeled $u_f$ and $u_e$ respectively, in $\Oh{\log n}$ time. Because we represent $L^2$ as a leaf-linked tree, we can walk along $L^2$ to recover all points of $A(u)$.}
%     \label{fig:tangent-query}
%   \end{minipage}
% \end{figure}

At a high level, we want to repeatedly identify and remove the point which causes the area of the convex hull to decrease the most.
Such an iteration is a \emph{peel}, and we call the amount the area would decrease if point $u$ was peeled the sensitivity $\sens{u}$ of $u$. To efficiently find the point to peel, we maintain a priority queue $Q$ on the sensitivities of hull points. 
Only points on the convex hull may have positive sensitivity, and in lines~\ref{alg:init_PQ}-\ref{alg:init_sens} we compute the initial sensitives of the points on the convex hull and store them in $Q$. 
For a hull point $u$, to compute its sensitivity $\sens{u}$ we find its active points $A(u)$.
Note they must be on the second convex layer, and if $u$'s neighbors are $t$ and $v$, then the points $A(u)$ are in the triangle $\Tri{t}{u}{v}$.
In line~\ref{alg:computing_convex_layers} we compute the two outer convex hull layers represented as balanced binary trees. That allows us to compute $A(u)$ using tangent queries on the inner layer from $t$ and $v$.
Then $\sens u$ can be found by computing the area of the polygon $\poly{t \circ v \circ A(u)}$.

% \begin{figure}
%     \centering
%     \includegraphics[scale=0.11]{figs/Active points 2.png}
%     \caption{Two adjacent hull points with their active points highlighted. Notice that there is one overlapping point which we have denoted with a multicolored circle. We show in Proposition~\ref{prop:small-intersection} that there can be no more than one overlapping point between adjacent triangles, which simplifies the process of updating the sensitivities of neighboring points.}
%     \label{fig:active-pts-2}
% \end{figure}

As points are peeled (lines~\ref{alg:peeling_start}-\ref{alg:peeling_end}) layers $L^1$ and $L^2$ must be restored. 
To restore $L^1$ when point $u$ is peeled (line~\ref{alg:peel}) we perform tangent queries on $L^2$ as in~\cite{loffler2014unions} to find $u$'s active points $A(u)$ (line~\ref{alg:find_active}) and move $A(u)$ from $L^2$ to $L^1$. See Figure~\ref{fig:tangent-query} for an example of tangent queries from $L^1$ to $L^2$.

To restore the broken part of $L^2$, we perform extreme point queries on the remaining points efficiently using a dynamic convex hull data structure $D_{CH}$ (line~\ref{alg:compute_CH_data_structure}) as in~\cite{1992_hershberger_semi_dynamic_CH} or~\cite{2002_brodal_rico_dynamic_CH}. As described in Lemma~\ref{lem:update-sensitivities}, $A(u)$ is always contiguous on $L^2$. Therefore, removing $A(u)$ from $L^2$ requires us to restore it between two ``endpoints'' $a$ and $b$. The first extreme point query uses line~$\overline{ab}$ in the direction of $u$. If a point $z$ from $D_{CH}$ is found then at least two more queries are performed with lines $\overline{za}$ and $\overline{zb}$. In general, if $k$ points are found then the number of queries is $2k + 1$. The $k$ points are deleted from $D_{CH}$. This all happens on line~\ref{alg:restore_layers_and_DCH}. 

Next, we compute the sensitivities of the new points on the hull (line~\ref{alg:compute_new_sens}) and insert them into the priority queue. 
Finally, we update the sensitivities of $u$'s neighbors $t$ and $v$ (line~\ref{alg:updating_neighbour_areas}), which, by Lemma~\ref{lem:combined}(4), are the only two points already in $Q$ whose sensitivity changes.

\begin{algorithm} %documentation https://texdoc.org/serve//algo/0
%SETUP
\caption{Weighted peeling}\label{alg:area_weighted_peeling}
\KwIn{A set of $n$ points $P$ in 2D %and an integer $0 < k < n$ for the number of outliers to peel
}
%\KwOut{ ... }

% ALGORITHM DESCRIPTION
$L^1, L^2 \longleftarrow$ the first two convex layers of $P$\nllabel{alg:computing_convex_layers} \\ %\tcp*[h]{each layer is represented as a red-black binary tree}
$Q \longleftarrow$ empty max priority queue \nllabel{alg:init_PQ}\\
\For{$i=1$ \KwTo $\lrSize{L^1}$}
    {   
        $u \longleftarrow L^1_i$\nllabel{alg:init_area_calc}\\
        Compute sensitivity $\sens{u}$ for $u$ \\
        $Q.insert\!\lrParents{u, \sens{u}}$ \\
        \nllabel{alg:init_sens} 
    }
    
$D_{CH} \longleftarrow$ a dynamic convex hull data structure on $P \setminus \lrCurlyBrackets{L^1 \cup L^2}$ \nllabel{alg:compute_CH_data_structure} \\

\For{$i=1$ \KwTo $n$}
    {   \nllabel{alg:peeling_start}
        $u \longleftarrow Q.extractMax$ \nllabel{alg:peel} \\
        $A(u) \longleftarrow$ $u$'s active points \nllabel{alg:find_active}\\ %using tangent queries from $v$ and $w$ on $L^2$
        Delete $u$ from $L^1$ and update $L^1$, $L^2$ and $D_{CH}$ \nllabel{alg:restore_layers_and_DCH}\\
        \For{$i=1$ \KwTo $\lrSize{A(u)}$}
        {   
            $\Bar{u} \longleftarrow A(u)_i$ \\
            Compute sensitivity $\sens{\Bar u}$ for $\Bar u$ \nllabel{alg:compute_new_sens} \\ 
            %compute $Area(A(u)_i)$ using tangent queries from $v$ and $w$ on $L^2$ \\
            $Q.insert\!\lrParents{\Bar u, \sens{\Bar u}}$ \\
        }
        $t, v \longleftarrow$ neighbors of $u$ in $L^1$ \\ 
        Update $Q\!\lrBrackets{t}$ and $Q\!\lrBrackets{v}$ \nllabel{alg:updating_neighbour_areas}\\
        %Restore the convex layers 
        \nllabel{alg:peeling_end}
    }

\end{algorithm}

\subsection{Analysis}
The hardest part of the analysis is showing that the overall time spent on lines~\ref{alg:compute_new_sens}~and~\ref{alg:updating_neighbour_areas} is $\Oh{n\log n}$. 
We first show that, excluding the time spent on these lines, the running time of Algorithm~\ref{alg:area_weighted_peeling} is $\Oh{n \logn}$. 
In line 1 we compute the first and second convex layers in $\Oh{n\log{n}}$ time by running any optimal convex hull algorithm twice. 
In lines~\ref{alg:init_PQ}~to~\ref{alg:init_sens}, we compute the initial sensitivities by finding the points active for each $u \in L^1$. 
As described above, we can do this using two tangent queries on $L^2$ from the neighbors of $u$. 
%We can walk along $L^2$ between them to recover $A(u)$. 
Once $A(u)$ is found for each $u$, we find $\sens{u}$ by computing the area of the polygon $\poly{t \circ u \circ v \circ A(u)}$, where $t$ and $v$ are the neighbors of $u$. 
In total, we make $\Oh{|L^1|} = \Oh{n}$ tangent queries, each of which takes $\Oh{\log n}$ time. 
Since the area of a simple polygon can be computed in linear time~\cite{meister1769generalia}, and by Lemma~\ref{lem:combined}(1) each point on $L^2$ is active in at most three triangles in the initial configuration, the total time to compute all areas is $\sum_{u \in L^1} \OhTheta{1 + \lrSize{A(u)}} = \Oh{\lrSize{L^1} + \lrSize{L^2}} = \Oh{n}$ time. % time to compute the area $v \circ A(u) \circ w$ for each point $u$. 
Therefore, the overall time to initialize the priority queue is $\Oh{n\log n}$. 

Initializing $D_{CH}$ in line~\ref{alg:compute_CH_data_structure} takes $\Oh{n\log n}$ time~\cite{1992_hershberger_semi_dynamic_CH}. 
In line~\ref{alg:find_active}, we can perform tangent queries on $L^2$ from $t$ and $v$ to find the first and last active points of $u$.
In line~\ref{alg:restore_layers_and_DCH}, it will take no more than $\Oh{n}$ tangent queries to restore $L^1$ and $L^2$ throughout the algorithm by charging the queries to the points moved from the center convex hull to $L^2$ or from $L^2$ to $L^1$. % when restoring L^1 we use two tangent queries and otherwise walk along L^2. For one iteration the time is 2lgn + #steps = 2lgn + #points_moved_to_L^1 + 1 and sum_iterations #points_moved_to_L^1 = n. 
Using an efficient dynamic convex hull data structure, it takes $\Oh{\log n}$ amortized time to delete a point and thus $\Oh{n\logn}$ time overall~\cite{1992_hershberger_semi_dynamic_CH, 2002_brodal_rico_dynamic_CH}. 
We add points to the priority queue $n$ times, delete points from the priority queue $n$ times, and perform $\Oh{1}$ priority queue update operations for each iteration of the outer loop on line~\ref{alg:peeling_start}. 
Excluding lines~\ref{alg:compute_new_sens}~and~\ref{alg:updating_neighbour_areas} this establishes the overall $\Oh{n\logn}$ running time.

To bound the total time spent on line~\ref{alg:compute_new_sens} to $\Oh{n\log n}$, we prove Theorem~\ref{thm:bounding_active_points_amortized}, bounding the total number of times points becomes active to $\Oh{n}$. 
%That is, letting $A_i$ be the number of active points after peeling $i$ points then $\sum_{i=1}^n \lrSize{A_i - A_{i - 1}} = \Oh{n}$. % for $i = 0,1,2,3,...,n$, 
Computing $\sens{\Bar u}$ in line~\ref{alg:compute_new_sens} requires us to find $A(\Bar u)$, where $\Bar u$ is a new point added to the first layer. 
From the theorem, it takes $\Oh{n\logn}$ time to compute $A(\Bar u)$ for every $\Bar u$. In addition, because it takes $\OhTheta{1 + |A(\Bar u)|}$ to compute $\sens{\Bar u}$ from $A(\Bar u)$, overall it takes $\Oh{n}$ time to compute $\sens{\Bar u}$ for every $\Bar u$.
% To find $A(\Bar u)$ we perform tangent queries on $L^2$ which we charge to $\Bar u$ moving from $L^2$ to $L^1$ for total $\Oh{n\logn}$ time. 
% It now takes $\OhTheta{1 + \lrSize{A(\Bar u)}}$ time to compute $\sens{\Bar u}$ and by Theorem~\ref{thm:bounding_active_points_amortized} we can charge this to the newly active points $A(\Bar u)$ for $\Oh{n}$ time in total.

%Therefore, the theorem implies that, throughout the algorithm, the total number of tangent queries done to initially find $A(\Bar u)$ for every $\Bar u$ is $\Oh{n}$. 

To bound the total time spent on line~\ref{alg:updating_neighbour_areas} on updating the sensitivities of $u$'s neighbors to $\Oh{n \log n}$, we prove Lemma~\ref{lem:update-sensitivities}. Together with Theorem~\ref{thm:bounding_active_points_amortized}, it implies the desired result.

\section{Geometric properties of peeling}
In this section, we develop an amortized analysis of peeling to show that lines~\ref{alg:compute_new_sens} and \ref{alg:updating_neighbour_areas} can be computed efficiently. We ultimately aim to show that the number of times that any point becomes active for any triangle is $\Oh{n}$, bounding the amount of work done to initialize new triangles to $\Oh{n\log n}$. 
Then we show that the amount of work done to update the sensitivities of neighbor points is proportional to the number of new active points for them and an additive $\Oh{\log n}$ term. 
Thus, updating the sensitivities over all $n$ iterations takes $\Oh{n\log n}$.

\subsection{Preliminaries}
When considering outer hull points, we use the notation $\Tri{t}{u}{v}$ for the triangle formed by $u$, its counterclockwise neighbor $t$, and its clockwise neighbor $v$. 
For a set of ordered vertices $V$ we let $\poly{V}$ be the polygon formed by the points in the (cyclical) order. We say $p \in \poly V$ if $p$ is strictly inside the polygon.

The following Lemma~\ref{lem:combined} combines a number of simple but useful propositions.

\begin{lemma}\label{lem:combined}
    For a set of points $P$, the following propositions are true:
\begin{enumerate}
    \item \label{prop:active-in-three}
    Any point $p \in P$ %not on the first layer 
    is active for at most three points on the first layer.% $\Tri{u}{v}{w}$ for any $u \in L^1$.
    
    \item \label{prop:no_active_points_in_active_triangle}
    Let $\Tri{t}{u}{v}$ be a triangle for consecutive vertices $(t,u,v)$ on the first layer and let $p \neq q$ be points $p \in \Tri{t}{u}{v}$ and $q \in \Tri{t}{p}{v}$. Then $q \notin A(u)$.

    \item \label{prop:monotonicity_of_layering}
    Let $p$ be a point on any layer $k$. After deleting any point $q \neq p$ and reconstructing the convex layers, $p$ is on layer $k-1$ or $k$.

    \item\label{prop:adjacent-triangles-sensitivity-change-only}
    Let $(t, u, v)$ be consecutive vertices on the first layer $L^1$. Then if $u$ is deleted, among the vertices in $L^1$, only the sensitivities of vertices $t$ and $v$ change.

    \item \label{prop:small-intersection}
    For adjacent points $(u, v)$ on the hull, $|A(u) \cap A(v)| \leq 1$.
\end{enumerate}

\end{lemma}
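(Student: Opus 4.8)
The plan is to prove the five propositions in sequence, relying on the convex geometry of the layering structure and the definition of active points. I will use throughout that a point $p$ is active for $u$ precisely when, after deleting $u$ and restoring the layers, $p$ appears on the first layer; equivalently, $p \in A(u)$ iff $p$ lies inside the triangle $\Tri{t}{u}{v}$ and is visible along the gift-wrapping walk from $t$ around to $v$ that ignores $u$.

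\medskip

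\noindent\emph{Propositions (1) and (2): the ``at most three'' bound and the nesting obstruction.} I would prove these together, since (2) is the engine behind (1). For (2), suppose $p \in \Tri{t}{u}{v}$ and $q \in \Tri{t}{p}{v}$ with $q \neq p$. The key observation is that $p$ being active for $u$ means $p$ is an extreme point of the gift-wrapping chain replacing $u$; the triangle $\Tri{t}{p}{v}$ then lies entirely \emph{below} (on the $u$-side of) this chain, so any point $q$ strictly inside $\Tri{t}{p}{v}$ is separated from $u$ by the segments through $p$ and cannot itself lie on the restored first layer when $u$ is removed. Hence $q \notin A(u)$. For (1), I would argue that if a point $p$ were active for four or more first-layer points, then two of those triangles must ``stack'' in the sense of (2): geometrically, the at-most-three bound follows because each point on $L^2$ can be witnessed as active only from the (at most three) consecutive first-layer edges whose triangles it falls into, and (2) rules out the configuration in which a fourth triangle could also claim it. This is where I expect the main obstacle to lie --- making the counting in (1) airtight requires carefully enumerating which first-layer triangles a fixed second-layer point can belong to, and invoking (2) to discard the overlapping cases.

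\medskip

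\noindent\emph{Proposition (3): monotonicity of layer depth under deletion.} For a point $p$ on layer $k$, deleting some $q \neq p$ can only remove points from the hull and thereby expose deeper points; it can never push a point to a strictly deeper layer. I would formalize this by induction on the layer index: after deleting $q$, the first layer of $P \setminus \{q\}$ is contained in the first two layers of $P$ (only active points migrate up by one), and this ``shift up by at most one'' propagates inward. Concretely, since $P \setminus \{q\} \subseteq P$, the convex-layer depth of $p$ cannot increase, and a single deletion can decrease depth by at most one because only the points active for the outer chain move up exactly one layer; thus $p$ lands on layer $k-1$ or $k$.

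\medskip

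\noindent\emph{Propositions (4) and (5): localized sensitivity change and small overlap.} For (4), deleting $u$ changes the triangle (and hence the active set) only for the two first-layer neighbors $t$ and $v$, since every other first-layer vertex $w$ retains the same counterclockwise and clockwise neighbors and therefore the same triangle $\Tri{\cdot}{w}{\cdot}$; by Proposition (2) no newly exposed point can intrude into a non-adjacent triangle, so $\sigma(w)$ is unchanged for $w \neq t,v$. For (5), suppose for contradiction that two distinct points $p, p'$ both lie in $A(u) \cap A(v)$ for adjacent hull points $u,v$ sharing edge $\edge{uv}$. Both $p$ and $p'$ then sit inside both triangles $\Tri{t}{u}{v}$ and $\Tri{u}{v}{w}$, forcing them into the shared region near edge $\edge{uv}$; I would then invoke Proposition (2) to show one of them nests inside the other's triangle, contradicting that both are active for the same point. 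This yields $|A(u) \cap A(v)| \leq 1$.
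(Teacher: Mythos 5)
Your decomposition mirrors the paper's, but two of the five parts have genuine gaps where you defer to Proposition (2) to do work it cannot do. For (1), the paper does \emph{not} derive the bound from (2): it argues directly that for non-adjacent hull vertices $z$ and $u$ the triangles $\TriS{z}$ and $\TriS{u}$ are disjoint (if they intersected, either a vertex of one triangle would lie inside the other or the hull would be a self-intersecting polygon, both contradicting convexity), so a fixed point can lie in at most the three triangles of some consecutive triple. Your sentence ``each point on $L^2$ can be witnessed as active only from the (at most three) consecutive first-layer edges whose triangles it falls into'' is precisely the claim that needs proof, and Proposition (2) --- which concerns nesting of $\Tri{t}{p}{v}$ inside a \emph{single} hull vertex's triangle $\Tri{t}{u}{v}$ --- says nothing about how triangles of \emph{different} hull vertices overlap. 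You flag this as the main obstacle; it is, and your proposed route through (2) does not close it. The same misattribution appears in your (4): the reason no point intrudes into a non-adjacent $\TriS{z}$ is again the disjointness of non-adjacent triangles plus the observation that $\TriS{z}$ and its contents are unchanged, not Proposition (2).

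For (5), the paper's proof has two steps: a tangent-line argument showing that if $p,p'\in A(u)\cap A(v)$ with $p$ the first point of $A(v)$, then the constraints ``no second-layer point left of $\hp{up}$'' and ``$p'$ left of $\hp{pv}$'' force the nested configuration $p\in\Tri{t}{p'}{v}$; and then (2) delivers the contradiction $p\notin A(u)$. You supply only the second step (``invoke Proposition (2) to show one of them nests inside the other's triangle'') --- but (2) does not \emph{produce} the nesting, it only exploits it, so the geometric heart of the argument is missing. Two smaller issues: in (2) you only treat $p\in A(u)$, whereas the statement allows any $p\in\Tri{t}{u}{v}$ (the paper splits into $p\in A(u)$ and $p\in\poly{t\circ v\circ A(u)}$, and the latter case is the one actually invoked in Lemma~\ref{lem:still-in-triangle}); and in (3) your claim that a deletion promotes a point by at most one layer is asserted (``this shift up by at most one propagates inward'') rather than proved --- the paper establishes it by an induction on layers using convex combinations, showing that a jump to layer $k-2$ or above would contradict $p$ having been on layer $k$.
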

\begin{proof}
    See Section~\ref{sec:app_lemma_combined} in the appendix.
\end{proof}

\subsection{Bounding the active points}
%It is clear that in a fixed point set, a point can only be active for at most 3 triangles. until the hull point is peeled or 
We will show that once a point is active for a hull point, it remains active for that hull point until the point is moved to the first layer. This implies a much stronger result by Lemma~\ref{lem:combined}(1): over the entire course of the algorithm, a point becomes active for at most three other points. To do so, we first show that for each peel the active points $A(u)$ remain in $u$'s triangle (Lemma~\ref{lem:still-in-triangle}) and second that the points in $A(u)$ remain active (Lemma~\ref{lem:still-active}). %, as being active for a triangle implies being inside that triangle. And at any given time, a point can only be in three triangles. 

%We extend these ideas that proved this result to describe an algorithm that efficiently performs line~\ref{alg:updating_neighbour_areas} of algorithm~\ref{alg:area_weighted_peeling}.

\begin{lemma}\label{lem:still-in-triangle}
    Given a set of points $P$, for all adjacent hull points $(u, v)$ and for all points $p \in A(u) \setminus A(v)$, if $v$ is deleted then $p$ still remains within $u$'s triangle.
    %Given a set of points $P$, f
\end{lemma}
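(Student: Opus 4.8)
The plan is to pin down $u$'s triangle after the deletion and then verify three half-plane containments. Let $t$ be $u$'s counterclockwise neighbor (unchanged by deleting $v$) and let $v'$ be $u$'s new clockwise neighbor once $L^1$ is restored; by construction $v'$ is the first active point of $v$ (the element of $A(v)$ adjacent to $u$), or $v'$ is $v$'s old clockwise neighbor $w$ when $A(v)=\emptyset$. So $u$'s triangle becomes $\Tri{t}{u}{v'}$ and the goal is $p\in\Tri{t}{u}{v'}$. I would first record that $p$ stays on the second layer: since $p\notin A(v)$, deleting $v$ does not lift $p$ to $L^1$, and by Lemma~\ref{lem:combined}(\ref{prop:monotonicity_of_layering}) a layer-$2$ point can only rise one layer or stay, so $p$ remains on $L^2$. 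It then suffices to show $p$ lies on the interior side of the two edges $\edge{tu}$, $\edge{uv'}$ and of the chord $\edge{tv'}$.

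The two edge conditions are easy. The edge $\edge{tu}$ is untouched and $p\in A(u)\subseteq\Tri{t}{u}{v}$ already lies on its interior side. For $\edge{uv'}$, note $v'\in A(v)\subseteq\Tri{u}{v}{w}$, so the ray $\hp{uv'}$ sits in the wedge at $u$ between $\hp{uv}$ and $\hp{uw}$, i.e.\ strictly past $\hp{uv}$ away from $t$. Rotating the supporting line of $\edge{uv}$ about $u$ into $\edge{uv'}$ therefore only enlarges the half-plane containing $t$, and since $p$ lies on the $t$-side of $\edge{uv}$ it stays on the interior side of $\edge{uv'}$.

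The chord $\edge{tv'}$ is the crux, and I expect it to be the main obstacle, since this is the only place the structure of $A(u)$ (and the exclusion $p\notin A(v)$) is really needed. I would argue by contradiction, assuming $p$ lies on the far side of $\edge{tv'}$ from $u$. If $v'$ lies below the old chord $\edge{tv}$ (outside $\Tri{t}{u}{v}$), then pivoting about $t$ from $\edge{tv}$ to $\edge{tv'}$ moves the chord away from $u$, so $p$---on $u$'s side of $\edge{tv}$ because $p\in A(u)$---cannot be separated, a contradiction. Otherwise $v'\in\Tri{t}{u}{v}$, and then $p$ being above $\edge{tv}$ (from $p\in A(u)$) yet below $\edge{tv'}$ places $p$ beneath $v'$ inside the sub-triangle $\Tri{t}{v'}{v}$; by the domination principle of Lemma~\ref{lem:combined}(\ref{prop:no_active_points_in_active_triangle}) this forces $p\notin A(u)$, contradicting the hypothesis. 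The clean way to package both cases is via convexity of $u$'s cap $A(u)\cup\{t,v\}$: the second edge condition confines $p$ to the portion of this convex cap cut off on $t$'s side by the line through $\edge{uv'}$, and since $v'$ lies on $L^2$ (hence no nearer to $u$ than the cap along $\hp{uv'}$) the chord $\edge{tv'}$ runs weakly under that portion, keeping $p$ on $u$'s side. I would therefore spend most of the effort making this convexity/domination step airtight, as the edge conditions are essentially immediate.
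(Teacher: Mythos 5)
Your proof is correct and follows essentially the same route as the paper's: the same case split on whether $v'$ lies inside or outside $\Tri{t}{u}{v}$, the same half-plane containments for $\edge{tu}$, $\edge{uv'}$, and $\edge{tv'}$, and the same appeal to Lemma~\ref{lem:combined}(\ref{prop:no_active_points_in_active_triangle}) to rule out $p\in\Tri{t}{v'}{v}$ in the inner case. The extra detail you supply (the wedge argument for $\edge{uv'}$ and the layer-monotonicity remark) is sound but not needed beyond what the paper already uses.
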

\begin{proof}
    Let $t$ be $u$'s other neighbor, and w.l.o.g. let the clockwise order on the hull be $(t, u, v)$. Then if $v'$ is $u$'s new neighbor after deleting $v$, the clockwise order on the new hull will be $(t, u, v')$. Because $p$ is active for $u$ before $v$ is deleted, $p \in \Tri tuv$.
    
    First, we consider the case where $v' \notin \Tri{t}{u}{v}$. We want to show that $p \in \Tri{t}{u}{v'}$. Equivalently, that $p$ is in the intersection of the three half-planes $\hp{tu}$, $\hp{uv'}$, and $\hp{tv'}$. Clearly, $p$ must satisfy the half-planes $\hp{tu}$ and $\hp{uv'}$ as these coincide with hull edges. In addition, since $v' \notin \Tri tuv$, the half-plane for $\hp{tv}$ is a subset of the half-plane for $\hp{tv'}$. Because $p \in \Tri tuv$, $p$ satisfies $\hp{tv}$. Therefore, $p$ must satisfy $\hp{tv'}$. 

    Now we consider the case where $v' \in \Tri tuv$. Assume that $p \notin \Tri tu{v'}$. Then because we know that $p \in \Tri tuv$, either $p \in \Tri t{v'}v$ or $p \in \Tri u{v'}v$. If $p \in \Tri t{v'}v$, by Lemma~\ref{lem:combined}(2), $p$ could not have been active for $u$ prior to deleting $v$. If $p \in \Tri u{v'}v$, $p$ is now outside of the convex hull. Either way, this is a contradiction.

\end{proof}

The following Lemma~\ref{lem:still-active} shows that if $p$ is in $A(u)$, it remains in $A(u)$ until moved to the first layer, after which it never becomes active again. 
% That is, it is monotone with respect to a triangle $\TriS{u}$. 
It also shows that the active points $A(u)$ only change by adding or deleting points from either end, and thus can easily be found.
\begin{lemma}\label{lem:still-active}
    Given a set of points $P$, for all hull points $u$ and $v$ and for all points $p \in A(u) \setminus A(v)$, upon deleting $v$, $p$ is in $A(u)'$, $u$'s new set of active points.
\end{lemma}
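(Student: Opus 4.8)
The plan is to sidestep a direct geometric tracking of the gift-wrapping chain and instead reduce everything to a clean characterization of activeness in terms of the convex layers of point subsets. Write $L^1(S)$ and $L^2(S)$ for the first and second convex layers of a set $S$, and let $CH(S)$ denote its convex hull. Unfolding the definition, $p$ is active for $u$ exactly when $p$ moves to the first layer after $u$ is deleted, i.e. $p \notin L^1(P)$ and $p \in L^1\lrParents{P \setminus \lrCurlyBrackets{u}}$. Since a point promoted to the first layer by a single deletion can only have come from the second layer (monotonicity, Lemma~\ref{lem:combined}(3)), and since active points are known to lie on $L^2$, this is equivalent to
\[ p \in A(u) \iff p \in L^2(P) \text{ and } p \in L^1\lrParents{P \setminus \lrCurlyBrackets{u}}. \]
The goal is then to show that both conjuncts survive the deletion of $v$, with $P$ replaced by $P \setminus \lrCurlyBrackets{v}$ and $u$ still a hull vertex.

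First I would handle membership in the second layer. Since $p \in A(u) \subseteq L^2(P)$ and $p \notin A(v)$, deleting $v$ does not promote $p$ to the first layer — that is precisely what $p \notin A(v)$ asserts. By monotonicity (Lemma~\ref{lem:combined}(3)) a second-layer point can only move to the first layer or stay on the second after one deletion, so $p \in L^2\lrParents{P \setminus \lrCurlyBrackets{v}}$.

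Second I would handle the hull-vertex condition. From $p \in A(u)$ we know $p$ is a vertex of $CH\lrParents{P \setminus \lrCurlyBrackets{u}}$, i.e. $p \notin CH\lrParents{P \setminus \lrCurlyBrackets{u, p}}$. Deleting a further point $v \neq p$ only shrinks the hull of the remaining points, so $p \notin CH\lrParents{P \setminus \lrCurlyBrackets{u, v, p}}$; hence $p$ is still a vertex of $CH\lrParents{P \setminus \lrCurlyBrackets{u, v}} = CH\lrParents{\lrParents{P \setminus \lrCurlyBrackets{v}} \setminus \lrCurlyBrackets{u}}$, that is, $p \in L^1\lrParents{\lrParents{P \setminus \lrCurlyBrackets{v}} \setminus \lrCurlyBrackets{u}}$. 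Combining the two parts with the characterization above yields exactly $p \in A(u)'$.

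I expect the only delicate point to be justifying the characterization itself: that ``deleting $u$ and restoring the two layers'' coincides with recomputing the convex layers of $P \setminus \lrCurlyBrackets{u}$, and that Lemma~\ref{lem:combined}(3) may legitimately be applied to the deletion of $v$. Once these are pinned down, the argument is two one-line monotonicity/extremality facts and needs no case analysis on whether $u$ and $v$ are adjacent. If one instead prefers the purely geometric route, the replacement for the second step is to invoke Lemma~\ref{lem:still-in-triangle} to place $p$ inside $u$'s new triangle $\Tri{t}{u}{v'}$ and then Lemma~\ref{lem:combined}(2) to rule out any point lying between $p$ and the segment $\edge{tv'}$, so that $p$ survives on the gift-wrapping chain from $t$; this verification of the chain is the step I would expect to require the most care, which is exactly why the set-theoretic characterization is the more attractive path.
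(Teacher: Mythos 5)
Your proof is correct, but it takes a genuinely different route from the paper's. The paper argues geometrically: it splits into cases according to whether $u$ and $v$ are adjacent and whether $u$'s new neighbour $v'$ lies in $A(u)$, uses Lemma~\ref{lem:still-in-triangle} to keep $p$ inside the new triangle $\Tri{t}{u}{v'}$, and then certifies via a gift-wrapping argument that $p$ survives on the chain of active points. You instead reduce activeness to the conjunction ``$p$ is on the second layer and $p$ is an extreme point of $P \setminus \lrCurlyBrackets{u}$'' and observe that each conjunct is preserved when $v$ is removed: the first by $p \notin A(v)$ together with the layer monotonicity of Lemma~\ref{lem:combined}(3), the second because deleting a further point only shrinks $CH\lrParents{P \setminus \lrCurlyBrackets{u, p}}$, so an extreme point stays extreme. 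Both halves check out (the equivalence itself is justified by Lemma~\ref{lem:combined}(3) plus the fact that restoring the layers after a deletion yields the true convex layers of the remaining points), and your argument is shorter, needs no case analysis, and does not use Lemma~\ref{lem:still-in-triangle} at all. What it buys less of is structure: the paper's case analysis additionally establishes \emph{how} $A(u)$ changes --- it loses at most the single shared point $v'$ and gains only a contiguous run of new active points appended after $u_e$ --- and Lemma~\ref{lem:update-sensitivities} explicitly ``inspects the proof of Lemma~\ref{lem:still-active}'' to conclude that the polygon $U$ loses at most two vertices and gains at most $1 + \delta_u$ contiguous ones. If your proof were substituted verbatim, that contiguity/endpoint claim would have to be proved separately, e.g., by the geometric fallback you sketch via Lemma~\ref{lem:still-in-triangle} and Lemma~\ref{lem:combined}(2). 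As a proof of the stated lemma alone, though, yours is clean and arguably preferable.
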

\begin{proof}$ $\newline

    \begin{case}[$u$ is not adjacent to $v$]\label{case:u-and-v-not-adj}
    \end{case}
    If $u$ is not adjacent to $v$, there are no changes to $\TriS{u}$ upon deleting $v$, and thus, $A(u) = A(u)'$. 
    
    For the following cases, assume that $u$ was adjacent to $v$. Then by Lemma~\ref{lem:still-in-triangle}, $p$ is still in the triangle defined by $u$ even after deleting $v$. Also, w.l.o.g. let $(u, v)$ be the clockwise ordering of the points, and let $v'$ be $u$'s new neighbor.

    \begin{case}[$v' \in A(u)$]\label{case:v'-in-A(u)}  
    \end{case}
    By Lemma~\ref{lem:combined}(5), $A(u) \cap A(v) = v'$.
    By Lemma~\ref{lem:still-in-triangle}, all points $A(u) \setminus \lrCurlyBrackets{v'}$ are in $\Tri{t}{u}{v'}$. 
    Because the second layer is a convex hull, each consecutive pair of points $(a,b)$ in $t \circ A(u)$ define a half-plane $\hp{ab}$ with only points from the first layer to the left of each half-plane. 
    This is still the case after deleting $v$ by Lemma~\ref{prop:monotonicity_of_layering}. 
    Since the only new points on the first layer are $A(v)$ then all points in $A(u) \setminus \lrCurlyBrackets{v'}$ remain on the second layer. 
    Thus, the gift-wrapping starting from $t$ wraps around all points in $A(u) \setminus \lrCurlyBrackets{v'}$. 
    Gift wrapping can hit no new points because, if that were true, there must be some point on the second layer to the left of one of the half-planes in described above.
    Thus, $A(u)' = A(u) \setminus \lrCurlyBrackets{v'}$.

    \begin{case}[$v' \notin A(u)$]\label{case:v'-not-A(u)}    
    \end{case}
    Let $u_e$ be the last point $A(u)$.
    Similar to the previous case, the gift-wrapping certifies all points in $A(u)$. 
    Again, wrapping will not hit new active points before wrapping around $u_e$ because that would imply the points hit were to the left of the half-planes described previously. 
    When wrapping continues around $u_e$, several new active points may appear, until the wrapping terminates at $v'$. Thus, $A(u) \subseteq A(u)'$.
\end{proof}

\begin{theorem}\label{thm:bounding_active_points_amortized}
    For any 2D convex hull peeling process on $n$ points the total number of times any point becomes active in any triangle is at most $3n$.
\end{theorem}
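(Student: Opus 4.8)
The plan is to reduce the global statement to a per-point bound: I want to show that each of the $n$ points becomes active for at most three distinct first-layer points over the \emph{entire} peeling process, and that it becomes active for each such hull point exactly once, so that summing over all points gives the claimed total of $3n$. The two preceding lemmas provide exactly the persistence needed. Lemma~\ref{lem:still-in-triangle} keeps an active point inside its hull point's triangle across a deletion, and Lemma~\ref{lem:still-active} upgrades this to persistence of activation itself: whenever $p \in A(u)\setminus A(v)$ and $v$ is deleted, $p$ remains in $A(u)'$. The remaining work is purely combinatorial bookkeeping on top of these facts together with Lemma~\ref{lem:combined}.

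First I would fix a point $p$ and track the set $S(p)$ of first-layer points for which $p$ is currently active. The key claim is that $S(p)$ never loses a member except at the single instant when $p$ is promoted to the first layer, at which instant it empties entirely and remains empty forever. To see this, consider the deletion of an arbitrary hull point $v$. By definition of active points, the points that migrate to the first layer upon deleting $v$ are precisely those of $A(v)$, so $p$ is promoted if and only if $p \in A(v)$. If $p \notin A(v)$, then every $u \in S(p)$ satisfies $p \in A(u)\setminus A(v)$, and Lemma~\ref{lem:still-active} guarantees $p$ stays active for $u$; hence $S(p)$ does not shrink (it may grow). If instead $p \in A(v)$, then $p$ moves to the first layer and loses all of its activations simultaneously. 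In particular, the only $u$ for which $p$ could drop out of $A(u)$ are those with $u = v$ or $p \in A(u)\cap A(v)$, and in either case $p \in A(v)$ forces the promotion; by Lemma~\ref{lem:combined}(5) the intersection slot is at most the single shared neighbor $v'$, consistent with this picture.

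Next I would invoke monotonicity of the layering, Lemma~\ref{lem:combined}(3): once $p$ reaches the first layer it stays on the first layer until peeled, and since active points lie on the second layer, $p$ never becomes active again. Combined with the previous paragraph, $S(p)$ is therefore monotonically non-decreasing right up to the moment of promotion, after which it is permanently empty. Consequently the number of \emph{distinct} first-layer points that $p$ is ever active for equals $\lrSize{S(p)}$ just before promotion (or its final value if $p$ is never peeled), and Lemma~\ref{lem:combined}(1) bounds this by $3$. Since activation for each such hull point is acquired only once (the set only grows), $p$ accounts for at most three activation events. Summing over all $n$ points yields at most $3n$ activations in total.

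I expect the main obstacle to be the middle step: cleanly certifying that the \emph{only} mechanism by which $p$ loses an activation is its own promotion to the first layer, and in particular ruling out a spurious loss for a point occupying the shared slot $A(u)\cap A(v)$. Identifying that shared point with the new neighbor $v'$ via Lemma~\ref{lem:combined}(5), and matching ``$p$ leaves the second layer'' with ``$p \in A(v)$'' directly from the definition of active points, is the delicate part; once it is settled the monotonicity and the final counting are immediate.
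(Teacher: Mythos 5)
Your proof is correct and follows the same route as the paper, which simply cites Lemma~\ref{lem:combined}(1) and Lemma~\ref{lem:still-active} and leaves the bookkeeping implicit; you have filled in exactly that bookkeeping (monotone growth of the activation set until promotion, emptiness thereafter via Lemma~\ref{lem:combined}(3), and the per-point bound of three from Lemma~\ref{lem:combined}(1)). No gaps.
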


\begin{proof}
   This follows directly from the results of Lemma~\ref{lem:combined}(1) and Lemma~\ref{lem:still-active}.
\end{proof}

\subsection{Updating sensitivities}
Next, we show that the total time to update the sensitivities in line~\ref{alg:updating_neighbour_areas} when peeling all $n$ points takes $\Oh{\Delta + n \logn}$ time. Here $\Delta$ is the the number of times any point becomes active for any triangle. Theorem~\ref{thm:bounding_active_points_amortized} proves that $\Delta = \Oh{n}$. 
The following lemma shows that the sensitivity of a point $u$ can be updated in time proportional to the increase to $\lrSize{A(u)}$ and an additive $\Oh{\logn}$ term. 
Figure~\ref{fig:compute_new_sensitivity} shows an example of how the sensitivity of a point changes when its neighbor is peeled.

\begin{lemma}\label{lem:update-sensitivities}
    Let $(u, v)$ be points on the first layer. Consider a peel of $v$ where $\delta_u$ new points become active points for $u$.
    %to change by $\delta_u$. 
    Then the updated sensitivity $\sens{u}$ can be computed in $\OhTheta{\delta_u + \log n}$ time, excluding the time to restore the second and first layer. %, and it is the time to compute line~\ref{alg:updating_neighbour_areas}
\end{lemma}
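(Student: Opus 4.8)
The plan is to compute the new sensitivity \emph{incrementally} from the old one, touching only the portion of $u$'s sensitivity polygon that actually changes when $v$ is peeled. Recall that $\sens{u}$ is the (signed) area of the polygon $\poly{t \circ u \circ v \circ A(u)}$, where $t$ and $v$ are $u$'s neighbors and $A(u) = (a_1, \dots, a_m)$ is listed in gift-wrapping order from $t$, so that $a_1 = u_s$ is nearest $t$ and $a_m = u_e$ is nearest $v$. Its boundary decomposes into the edge $\edge{tu}$, the subpath $u \to v \to a_m$, and the chain $a_m \to a_{m-1} \to \cdots \to a_1 \to t$ running along $L^2$. The key observation, which I will justify from Lemmas~\ref{lem:still-in-triangle} and~\ref{lem:still-active}, is that peeling the clockwise neighbor $v$ perturbs $A(u)$ only at its $v$-end: the first active point $u_s$ and the points $a_1, \dots, a_m$ remain active and in convex position, while $\delta_u$ new active points $b_1, \dots, b_{\delta_u}$ appear between $a_m$ and the new neighbor $v'$. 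Hence the new sensitivity polygon is $\poly{t \circ u \circ v' \circ A(u)'}$ with $A(u)' = (a_1, \dots, a_m, b_1, \dots, b_{\delta_u})$, and it shares with the old polygon the entire chain from $a_m$ back through $a_1$ to $t$ together with the edge $\edge{tu}$.

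Given this decomposition, the update proceeds in three steps, assuming as the statement permits that $L^1$ and $L^2$ have already been restored. First, read off $u$'s new neighbor $v'$ as the successor of $u$ on $L^1$ in $\Oh{1}$ time, and locate the new last active point $u_e'$ by a single tangent query from $v'$ onto the leaf-linked tree $L^2$ in $\Oh{\logn}$ time. Second, walk along $L^2$ from the old endpoint $a_m = u_e$ to $u_e'$, using the leaf links to step between consecutive vertices in $\Oh{1}$ each, thereby enumerating the $\delta_u$ new active points $b_1, \dots, b_{\delta_u}$ in $\OhTheta{\delta_u}$ time. Third, since the two polygons agree on the shared chain $a_m \to \cdots \to a_1 \to t \to u$, all of those shoelace terms cancel, and the new and old values of $\sens{u}$ differ exactly by the signed area of the closed loop bounded by the old subpath $u \to v \to a_m$ and the new subpath $u \to v' \to b_{\delta_u} \to \cdots \to b_1 \to a_m$. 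This loop has $\delta_u + \Oh{1}$ vertices, so the shoelace formula evaluates its signed area in $\OhTheta{\delta_u}$ time, and adding it to the stored value of $\sens{u}$ yields the updated sensitivity.

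Summing the three steps gives $\OhTheta{\delta_u + \logn}$ time, as claimed; the counterclockwise neighbor $t$ of the peeled point is handled symmetrically. The one degenerate situation is Case~\ref{case:v'-in-A(u)} of Lemma~\ref{lem:still-active}, where $v' \in A(u)$: here $\delta_u = 0$, the old last active point $a_m$ is promoted to become $v'$ itself, and the new polygon is obtained from the old one simply by deleting the vertex $v$, so the area update is $\Oh{1}$ after the $\Oh{\logn}$ tangent query that detects this case.

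I expect the main obstacle to be rigorously justifying the shared-subchain decomposition: one must argue that peeling $v$ leaves the $t$-side prefix of $A(u)$ --- both the identities of $u_s, a_1, \dots, a_m$ and their convex position on $L^2$ --- completely unchanged, so that their shoelace contributions cancel between the old and new polygons and only the $v$-end contributes to the change in $\sens{u}$. This is precisely the content of Lemmas~\ref{lem:still-in-triangle} and~\ref{lem:still-active}, which I invoke directly. A secondary, purely technical point is fixing a consistent clockwise orientation of both polygons so that the delta loop carries the correct sign when its area is added to the old sensitivity.
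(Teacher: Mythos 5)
Your proposal is correct and follows essentially the same route as the paper's proof: express $\sens{u}$ as the shoelace sum over the edges of $\poly{t \circ u \circ v \circ A(u)}$, use Lemmas~\ref{lem:still-in-triangle} and~\ref{lem:still-active} to argue that only the $v$-end of the polygon changes, locate the $\delta_u$ new active points with one $\Oh{\logn}$ tangent query from $v'$ plus a leaf-link walk, and add/subtract the $\Oh{\delta_u}$ affected terms. Your write-up is somewhat more explicit than the paper's (which compresses the cancellation argument into ``add and subtract the appropriate terms'' and delegates the case analysis, including the $v' \in A(u)$ case, to an inspection of Lemma~\ref{lem:still-active}), but the underlying argument is the same.
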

\begin{proof}
    The sensitivity $\sens{u}$ is equal to the area of the polygon $U = \poly{t \circ u \circ v \circ A(u)}$.
    By the \emph{shoelace formula}, the area of $U$ can be computed as the sum $S(U)$ of certain simple terms for each of its edges~\cite{2000_boland_Urrutia_polygon_chord_area_static, contreras1998cutting}. 
    We consider how $U$, and thus $S(U)$, changes when $v$ is peeled. 
    Inspecting the proof of Lemma~\ref{lem:still-active}, we see that at most two vertices are removed from $U$ and at most $1 + \delta_u$ vertices are added to $U$. 
    Furthermore, all the new vertices are located contiguously on the restored second layer and can be found in $\Oh{\delta_u + \logn}$ time using a tangent query from $u$'s new neighbor which replaces $v$.
    %Overall $\Oh{1}$ edges are removed from $U$ and $\Oh{\delta_u}$ edges are added to $U$.
    To update $\sens{u} = S(U)$, we simply add and subtract the appropriate $\Oh{\delta_u}$ terms depending on the removed and added edges.

\end{proof}

    \begin{figure}
        \centering
        \includegraphics[scale=0.16]{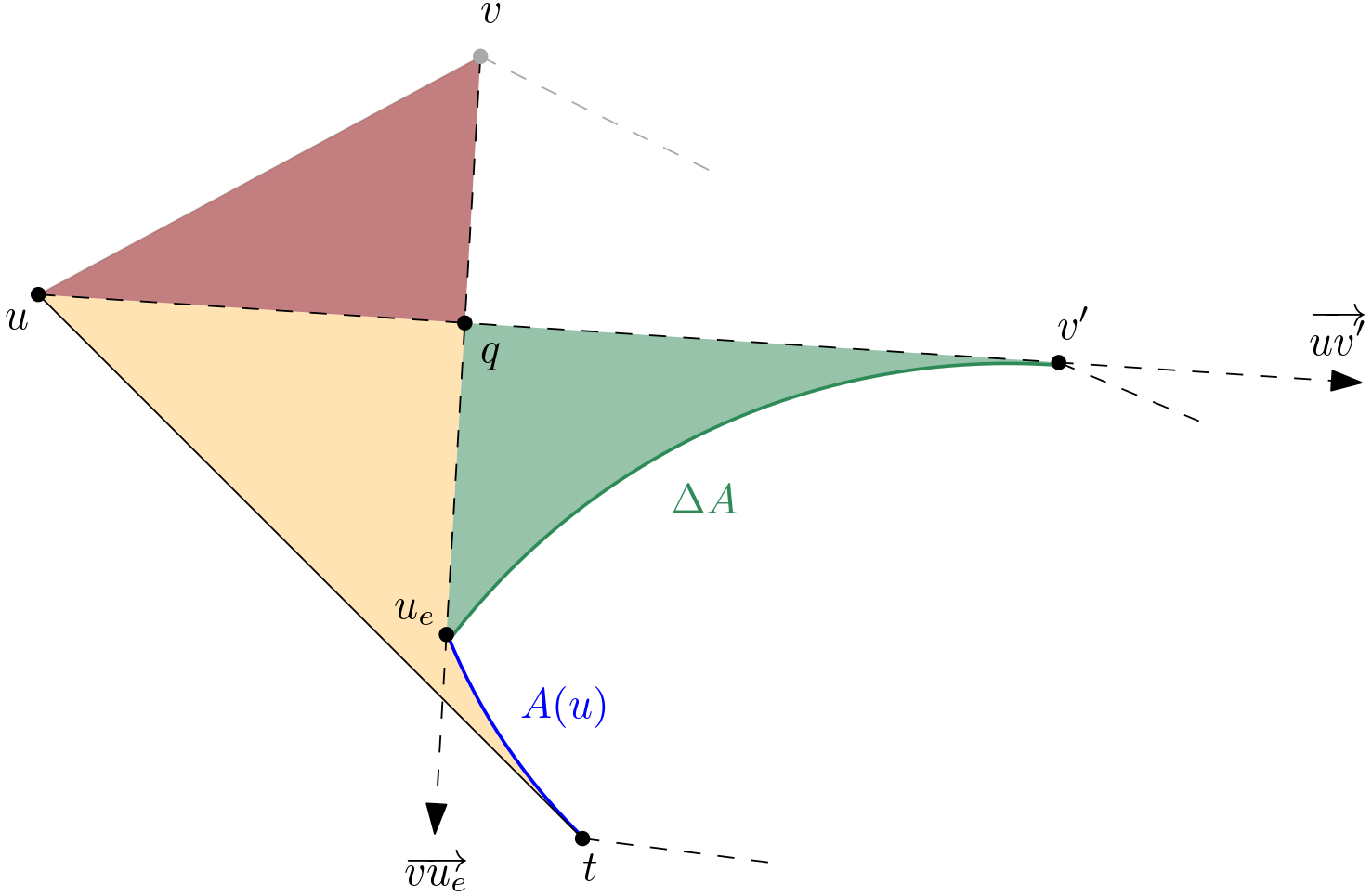}
        \caption{
        This figure shows how the sensitivity $\sens{u}$ changes when point $v$ is peeled. The point $q$ is the intersection of the tangent from $v$ to $u_e$ and the tangent from $u$ to $v'$, where $u_e$ is the last active point in $A(u)$ and $v'$ is the first active point in $A(v)$. After the peel, $v'$ replaces $v$ as $u$'s neighbor, and the points $\Delta A$ are newly active for $u$. The sensitivity $\sens{u}$ before peeling $v$ was equal to the area of $\poly{t \circ u \circ v \circ A(u)}$. After peeling $v$, the sensitivity $\sens{u}$ equals the area of $\poly{t \circ u \circ v' \circ \Delta A \circ A(u)}$. Note how this can be computed in $\Oh{\lrSize{\Delta A}}$ time from $\sens{u}$ before the peel of $v$ by subtracting the red area of $\Tri{u}{v}{q}$ and adding the green area of $\poly{u_e \circ q \circ v' \circ \Delta A}$.
        } \label{fig:compute_new_sensitivity}
    \end{figure}

\section{Generalization and open problems}
Theorem~\ref{thm:generalization} shows that Algorithm~\ref{alg:area_weighted_peeling} generalizes straightforwardly to other objectives such as peeling the point that causes the perimeter of the convex hull to decrease the most each iteration.

\begin{theorem}\label{thm:generalization}
    Let $u$ be a point and $O$ an objective where $\sensO{u}$ is the sensitivity of $u$ under $O$. Consider the following three conditions:
    \begin{enumerate}
        \item[\textbf{C1:}] If $u \notin L^1$, then $\sensO{u} = 0$.
        \item[\textbf{C2:}] If $u \in L^1$, then $\sensO{u} > 0$, and $\sensO{u}$ depends only on $u$, $u$'s neighbors and its active points $A(u)$.
        \item[\textbf{C3:}] 
        If a single point $p$ is added or removed from $A(u)$, then provided $\sensO{u}$ and the neighbors $a_i$ and $a_{j}$ of $p$ in $A(u)$, the new sensitivity $\sensO{u}'$ can be computed in $\Oh{\logn}$ time.
    \end{enumerate}
    If $O$ satisfies the above conditions, then Algorithm~\ref{alg:area_weighted_peeling} runs in $\Oh{n\logn}$ time for objective $O$. 
\end{theorem}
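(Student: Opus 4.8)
The plan is to observe that the running-time analysis of Algorithm~\ref{alg:area_weighted_peeling} cleanly separates into an \emph{objective-independent} combinatorial backbone and an \emph{objective-dependent} sensitivity computation, and that conditions \textbf{C1}--\textbf{C3} are tailored precisely so that the objective-dependent part stays within budget. First I would note that the entire backbone --- maintaining the two outer layers $L^1$ and $L^2$, the dynamic hull $D_{CH}$, locating active points by tangent queries, restoring the layers after a peel, and the priority-queue bookkeeping --- never refers to the area. By the analysis already given for these lines, this part costs $\Oh{n\logn}$ no matter which objective drives the peeling order. The crucial observation is that Theorem~\ref{thm:bounding_active_points_amortized} bounds the total number of activations by $3n$ for \emph{any} peeling process, and its supporting results (Lemma~\ref{lem:combined}, Lemma~\ref{lem:still-in-triangle}, Lemma~\ref{lem:still-active}) are purely geometric statements about how the layers evolve under a single deletion. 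None of them uses the area, so they apply verbatim to the (possibly different) order in which $O$ peels points, and in particular $\Delta=\Oh{n}$ continues to hold.

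Next I would invoke \textbf{C1} and \textbf{C2} for correctness and locality. By \textbf{C1} every non-hull point has $\sensO{p}=0$ and by \textbf{C2} every hull point has strictly positive sensitivity, so the maximum element extracted from $Q$ is always a point of $L^1$; the algorithm therefore always peels a genuine hull point, exactly as in the area case. Condition \textbf{C2} further guarantees that $\sensO{u}$ is a function of only $u$, its two neighbors and its active set $A(u)$ --- precisely the local data the algorithm already maintains --- so no extra information must be gathered to evaluate a sensitivity.

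The heart of the argument is replacing the shoelace computation of Lemma~\ref{lem:update-sensitivities} by \textbf{C3}. Whenever a point is peeled, Lemma~\ref{lem:still-active} shows that the active set of each affected hull point changes only by inserting or deleting points at its ends, and the total number of such single-point changes over the whole run is $\Oh{n}$ by Theorem~\ref{thm:bounding_active_points_amortized}. Applying \textbf{C3}, each individual insertion or deletion updates $\sensO{u}$ in $\Oh{\logn}$ time, so summing over all $\Delta=\Oh{n}$ changes, the objective-dependent updates on lines~\ref{alg:compute_new_sens} and~\ref{alg:updating_neighbour_areas} cost $\Oh{n\logn}$ in total; the $\Oh{1}$ boundary operations per peel (the tangent query for the new neighbor and the removal of at most two end vertices) contribute a further $\Oh{n\logn}$. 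Initialization is handled identically: building each $\sensO{u}$ from its local data via repeated application of \textbf{C3} (starting from an empty active set) costs $\Oh{(1+\lrSize{A(u)})\logn}$, and $\sum_{u\in L^1}(1+\lrSize{A(u)})=\Oh{n}$ by Lemma~\ref{lem:combined}(\ref{prop:active-in-three}). Combining these bounds with the $\Oh{n\logn}$ backbone yields the claimed running time.

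The main obstacle I anticipate is not any single calculation but the careful verification that every geometric lemma feeding into Theorem~\ref{thm:bounding_active_points_amortized} is genuinely order-independent and area-free, so that the activation budget $\Delta=\Oh{n}$ transfers legitimately to the peeling order induced by $O$. Once that is secured, \textbf{C3} mechanically converts the $\Oh{n}$ activation budget into the $\Oh{n\logn}$ time bound, so the remaining work is routine accounting rather than new geometry.
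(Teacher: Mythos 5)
Your proposal is correct and follows essentially the same route as the paper's proof: the objective-independent backbone is charged as before, \textbf{C1}/\textbf{C2} guarantee that only first-layer points are peeled and that only neighbors and newly surfaced points need updates, Theorem~\ref{thm:bounding_active_points_amortized} (which indeed holds for any peeling order) bounds the activations by $\Oh{n}$, and \textbf{C3} converts each single-point change into an $\Oh{\logn}$ update. Your explicit treatment of initialization via repeated application of \textbf{C3} and your remark that the geometric lemmas are order-independent are slightly more detailed than the paper's write-up, but the argument is the same.
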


\begin{proof}%[Proof of Theorem~\ref{thm:generalization}]
By conditions $\textbf{C1}$ and $\textbf{C2}$, it is always a point $u$ on the first layer that is peeled. 
Furthermore, when $u$ is peeled only the sensitivities of the new points on the first layer and the neighbors of $u$ must be updated since they are the only points for which their active points or neighbors change. Thus, Algorithm~\ref{alg:area_weighted_peeling} can be used for objective $O$. Now we will show that the runtime of Algorithm~\ref{alg:area_weighted_peeling} remains $\Oh{n\logn}$.

First, observe that all parts unrelated to computing sensitivities behave the same and still take $\Oh{n\logn}$ time. By condition $\textbf{C3}$, for a point $u$ on the first layer, its sensitivity $\sensO{u}$ only depends on its neighbors and active points $A(u)$. As described in the proof of Lemma~\ref{lem:update-sensitivities}, when the set of points that affect $\sensO{u}$ changes, these points are readily available. The total number of neighbor changes is $\Oh{n}$ since, in each iteration, only the neighbors of the points adjacent to the peeled point change. 
The total number of changes to active points is $\Oh{n}$ by Theorem~\ref{thm:bounding_active_points_amortized}. If there are multiple changes to the active points in one iteration, such as when deleting one of $u$'s neighbors, we perform one change at a time and, by condition $\textbf{C3}$, the total time to update sensitivities is $\Oh{n\logn}$. 
\end{proof}

For concrete examples, we show how the three objectives \emph{area} ($O_A$), \emph{perimeter} ($O_P$), and \emph{number of active points} ($O_N)$ fit into this framework. 

Let $\f{\sens{u}, a_i, p, a_{j}} = \sens{u} - \dist{a_i}{a_{j}} + \dist{a_i}{p} + \dist{p}{a_{j}}$ be a function for computing the sensitivity $\sens{u}$ when $p$ is added to $A(u)$ between $a_i$ and $a_{j}$ (the functions where a point is removed from $A(u)$ or a neighbor of $u$ changes are similar).  
%Let $\dist{a}{b}$ denotes the distance between points $a$ and $b$. 
For $f$ to match each of the objectives it is sufficient to implement $\dist{\cdot}{\cdot}$ as follows for points $a,b\in\R^2$:

\begin{enumerate}
    \item[$O_A$:] $\dist{a}{b} = \frac{1}{2} \lrParents{a_2 b_1 - a_1 b_2} $
    \item[$O_P$:] $ \dist{a}{b} = \sqrt{\lrParents{b_2 - a_2}^2 + \lrParents{b_1 - a_1}^2} $
    \item[$O_N$:] $ \dist{a}{b} = 1 $
\end{enumerate}

The case with $O_A$ is based on the shoelace formula. Additionally, for $O_N$ to satisfy condition \textbf{C2}, we add $1$ when computing the sensitivity of $u \in L^1$ to ensure that $\sens u > 0$ even if $|A(u)| = 0$.
For the three objectives, $f$ takes $\Oh{1}$ time to compute satisfying the $\Oh{\logn}$ time requirement from condition $\textbf{C3}$.

\subsection{Open problems}
The first open problem is extending the result to $\R^3$ or higher. Directly applying our approach requires a dynamic 3D convex hull data structure, and Theorem~\ref{thm:bounding_active_points_amortized} has to be extended to 3D. Second, is it possible to improve the quality of peeling by performing $z$-peels, even for $z = 2$ in $\OhSmall{n}$ time? Third, is there an efficient approximation algorithm for $k$-peeling? 

\section*{Acknowledgement}
We thank Asger Svenning for the initial discussions that inspired us to consider this problem.

\small
\bibliographystyle{abbrv}
\bibliography{refs}

\begin{thebibliography}{10}

\bibitem{aggarwal1989fining}
A.~Aggarwal, H.~Imai, N.~Katoh, and S.~Suri.
\newblock Fining k points with minimum spanning trees and related problems.
\newblock In {\em Proceedings of the fifth annual symposium on Computational geometry}, pages 283--291, 1989.

\bibitem{aloupis2006_layer_depth}
G.~Aloupis.
\newblock Geometric measures of data depth.
\newblock {\em DIMACS series in discrete mathematics and theoretical computer science}, 72:147, 2006.

\bibitem{angiulli2002fast}
F.~Angiulli and C.~Pizzuti.
\newblock Fast outlier detection in high dimensional spaces.
\newblock In {\em European conference on principles of data mining and knowledge discovery}, pages 15--27. Springer, 2002.

\bibitem{2009_Atanassov_exact_small_k}
R.~Atanassov, P.~Bose, M.~Couture, A.~Maheshwari, P.~Morin, M.~Paquette, M.~Smid, and S.~Wuhrer.
\newblock Algorithms for optimal outlier removal.
\newblock {\em Journal of Discrete Algorithms}, 7(2):239--248, 2009.
\newblock Selected papers from the 2nd Algorithms and Complexity in Durham Workshop ACiD 2006.

\bibitem{2000_boland_Urrutia_polygon_chord_area_static}
R.~Boland and J.~Urrutia.
\newblock Polygon area problems.
\newblock In {\em Proc. of the 12th Canadian Conf. on Computational Geometry}, Fredericton, NB, Canada, 2000.

\bibitem{2002_brodal_rico_dynamic_CH}
G.~Brodal and R.~Jacob.
\newblock Dynamic planar convex hull.
\newblock In {\em The 43rd Annual IEEE Symposium on Foundations of Computer Science, 2002. Proceedings.}, pages 617--626, 2002.

\bibitem{1985_chazelle_convex_layers}
B.~Chazelle.
\newblock On the convex layers of a planar set.
\newblock {\em IEEE Transactions on Information Theory}, 31(4):509--517, 1985.

\bibitem{contreras1998cutting}
F.~Contreras.
\newblock {\em Cutting polygons and a problem on illumination of stages.}
\newblock University of Ottawa (Canada), 1998.

\bibitem{clustering_DAVE1991657}
R.~N. Dave.
\newblock Characterization and detection of noise in clustering.
\newblock {\em Pattern Recognition Letters}, 12(11):657--664, 1991.

\bibitem{de2000computational}
M.~De~Berg.
\newblock {\em Computational geometry: algorithms and applications}.
\newblock Springer Science \& Business Media, 2000.

\bibitem{dobkin1983finding}
D.~P. Dobkin, R.~Drysdale, and L.~J. Guibas.
\newblock Finding smallest polygons.
\newblock {\em Computational Geometry}, 1:181--214, 1983.

\bibitem{1992_Eppstein_k3n2}
D.~Eppstein.
\newblock New algorithms for minimum area k-gons.
\newblock In {\em Proceedings of the Third Annual ACM-SIAM Symposium on Discrete Algorithms}, SODA '92, page 83–88, USA, 1992. Society for Industrial and Applied Mathematics.

\bibitem{eppstein1994_exact_smallest_diameter}
D.~Eppstein and J.~Erickson.
\newblock Iterated nearest neighbors and finding minimal polytopes.
\newblock {\em Discrete \& Computational Geometry}, 11:321--350, 1994.

\bibitem{1992_eppstein_kn3}
D.~Eppstein, M.~Overmars, G.~Rote, and G.~Woeginger.
\newblock Finding minimum area k-gons.
\newblock {\em Discrete \& Computational Geometry}, 7:45--58, 1992.

\bibitem{grubbs1949sample}
F.~E. Grubbs.
\newblock {\em Sample criteria for testing outlying observations}.
\newblock University of Michigan, 1949.

\bibitem{har2011expected}
S.~Har-Peled.
\newblock On the expected complexity of random convex hulls.
\newblock {\em arXiv preprint arXiv:1111.5340}, 2011.

\bibitem{2016_heuristic_peeling_distance}
A.~Harsh, J.~E., and P.~Wei.
\newblock Onion-peeling outlier detection in 2-d data sets.
\newblock {\em International Journal of Computer Applications}, 139:26--31, 04 2016.

\bibitem{1992_hershberger_semi_dynamic_CH}
J.~Hershberger and S.~Suri.
\newblock Applications of a semi-dynamic convex hull algorithm.
\newblock {\em BIT Numerical Mathematics}, 32:249--267, 1992.

\bibitem{peeling_layers1}
P.~J. Huber.
\newblock The 1972 wald lecture robust statistics: A review.
\newblock {\em The Annals of Mathematical Statistics}, 43(4):1041--1067, 1972.

\bibitem{hugg2006_layer_depth}
J.~Hugg, E.~Rafalin, K.~Seyboth, and D.~Souvaine.
\newblock An experimental study of old and new depth measures.
\newblock In {\em 2006 Proceedings of the Eighth Workshop on Algorithm Engineering and Experiments (ALENEX)}, pages 51--64. SIAM, 2006.

\bibitem{1995_kirkpatrick_Snoeyink_tangents_without_separating_line}
D.~Kirkpatrick and J.~Snoeyink.
\newblock Computing common tangents without a separating line.
\newblock In S.~G. Akl, F.~Dehne, J.-R. Sack, and N.~Santoro, editors, {\em Algorithms and Data Structures}, pages 183--193, Berlin, Heidelberg, 1995. Springer Berlin Heidelberg.

\bibitem{knorr1999finding}
E.~M. Knorr and R.~T. Ng.
\newblock Finding intensional knowledge of distance-based outliers.
\newblock In {\em Vldb}, volume~99, pages 211--222, 1999.

\bibitem{statistical_analysis_kwak2017statistical}
S.~K. Kwak and J.~H. Kim.
\newblock Statistical data preparation: management of missing values and outliers.
\newblock {\em Korean journal of anesthesiology}, 70(4):407, 2017.

\bibitem{loffler2014unions}
M.~L{\"o}ffler and W.~Mulzer.
\newblock Unions of onions: preprocessing imprecise points for fast onion decomposition.
\newblock {\em Journal of Computational Geometry}, 5(1), 2014.

\bibitem{mehlhorn2006reliable}
K.~Mehlhorn, R.~Osbild, and M.~Sagraloff.
\newblock Reliable and efficient computational geometry via controlled perturbation.
\newblock In {\em International Colloquium on Automata, Languages, and Programming}, pages 299--310. Springer, 2006.

\bibitem{meister1769generalia}
A.~Meister.
\newblock {\em Generalia de genesi figurarum planarum et inde pendentibus earum affectionibus}.
\newblock 1769.

\bibitem{1981_overmars_maintenance_2D_convex_hull_tangent_with_seprating_line}
M.~H. Overmars and J.~Van~Leeuwen.
\newblock Maintenance of configurations in the plane.
\newblock {\em Journal of computer and System Sciences}, 23(2):166--204, 1981.

\bibitem{ramaswamy2000efficient}
S.~Ramaswamy, R.~Rastogi, and K.~Shim.
\newblock Efficient algorithms for mining outliers from large data sets.
\newblock In {\em Proceedings of the 2000 ACM SIGMOD international conference on Management of data}, pages 427--438, 2000.

\bibitem{segal1998_enclosing_rectangle_exact}
M.~Segal and K.~Kedem.
\newblock Enclosing k points in the smallest axis parallel rectangle.
\newblock {\em Information Processing Letters}, 65(2):95--99, 1998.

\bibitem{peeling_layers2}
M.~I. Shamos.
\newblock {\em Problems in computational geometry}.
\newblock 1975.

\bibitem{data_exploration_zuur2010protocol}
A.~F. Zuur, E.~N. Ieno, and C.~S. Elphick.
\newblock A protocol for data exploration to avoid common statistical problems.
\newblock {\em Methods in ecology and evolution}, 1(1):3--14, 2010.

\end{thebibliography}

% \begin{thebibliography}{99}

% \bibitem{so2005}
% C. So and H. So.
% \newblock A groundbreaking result.
% \newblock {\em Journal of Everything}, 59(2):23--37, 2005.

% \end{thebibliography}

\section*{Appendix}

%\begin{figure}
  %  \centering
  %  \includegraphics[scale=0.07]{figs/Three in triangle.png}
  %  \caption{As shown in Proposition~\ref{prop:active-in-three}, points within $P$ are covered by at most three triangles.}
    %\label{fig:enter-label}
%\end{figure}
\subsection{Proof of Lemma~\ref{lem:combined}}\label{sec:app_lemma_combined}
\begin{customlemma}{\ref{lem:combined}(1)}%[Lemma~\ref{lem:combined} statement~\ref{prop:active-in-three}]
    Fix a point set $P$. Any point $p \in P$ %not on the first layer 
is active in at most three triangles.% $\Tri{u}{v}{w}$ for any $u \in L^1$.
\end{customlemma}\begin{proof}
    First, note that a point can only be active for a hull point $u$ if it is located inside $\TriS u$, so it is sufficient to show that any $p$ is strictly inside at most three triangles. In addition, one can prove this by showing that $\TriS u$ only intersects with its neighbors' triangles $\TriS t$ and $\TriS v$. 

    Consider some $\TriS z$, such that $z$ is not a neighbor of $u$. That is, $u$ is not one of the vertices of $\TriS z$. If $\TriS z$ intersects with $\TriS u$, then either a vertex of $\TriS z$ is inside $\TriS u$ or the convex hull is a self-intersecting polygon, both violating convexity.
\end{proof}

\begin{customlemma}{\ref{lem:combined}(2)}
    Let $\Tri{t}{u}{v}$ be a triangle for consecutive vertices $(t,u,v)$ on the first layer and let $p \neq q$ be points $p \in \Tri{t}{u}{v}$ and $q \in \Tri{t}{p}{v}$. Then $q \notin A(u)$.
\end{customlemma}
\begin{proof}
    By definition, $p \in \poly{t \circ v \circ A(u)}$ or $p \in A(u)$. 
    Either way, $q \in \Tri{t}{p}{v}$ implies that $q \in \poly{t \circ v \circ A(u)}$, so $q \not \in A(u)$. 
\end{proof}

\begin{customlemma}{\ref{lem:combined}(3)}
    Let $p$ be a point on any convex layer $k$. After deleting any point $q \neq p$ and reconstructing the convex layers, $p$ is on layer $k-1$ or $k$.
\end{customlemma}
\begin{proof}
    First we show that $p$ never moves inward to layer $k' > k$. Consider the outermost layer $L^1$. By a property of convex hulls, every point $v$ inside the convex hull is a convex combination of the hull points whereas any point $u \in L^1$ is not a convex combination of $L^1 - \{u\}$. If deleting $q$ causes $p \in L^1$ to descend to a layer inside $L^1$, that implies that $p$ is a convex combination of some subset of $P - \{q, p\}$. This contradicts the fact that $p$ is not a convex combination of $L^1 - \{p\}$ and by extension is not a convex combination of $P - \{p\}$. Because of the recursive definition of convex layers, the proof for subsequent layers is symmetric. 

    Now we will show that $p$ never moves up more than one layer at a time. This is clearly true for $L^1$ and $L^2$ because only one point is completely removed from the structure at at time (i.e. shifts to layer 0). For layers $k \geq 3$, consider a point $p$ on layer $k$ that moves to layer $k' \leq k-2$. Let $L^{*k'}$ be the set of points on layer $k'$ after deleting $q$. Let $L^{k-1}$ be the set of points on layer $k-1$ before deleting $q$.
    
    Because $p \in L^{*k'}$, no convex combination of the points in $L^{*k'} - \{p\}$ equals $p$ by convexity. By the inductive hypothesis, all points on $L^{k-1}$ are convex combinations of $L^{*k'}$ because upon deleting $q$ no point on $L^{k-1}$ advances above layer $k'$. Furthermore, they are all convex combinations of $L^{*k'} - \{p\}$ as $p$ itself is a convex combination of $L^{k-1}$. But if $p$ is not a convex combination of $L^{*k'} - \{p\}$, and all the points on layer $k-1$ are convex combinations of $L^{*k'} - \{p\}$, then prior to deleting $q$, $p$ was above layer $k-1$, which is a contradiction. 
\end{proof}

\begin{customlemma}{\ref{lem:combined}(4)}
    Let $(t, u, v)$ be consecutive vertices on the first layer $L^1$. Then if $u$ is deleted, among the vertices in $L^1$, only the sensitivities of vertices $t$ and $v$ change.
\end{customlemma}
\begin{proof}
    % Consider a vertex $z$ not adjacent to $u$. Since the vertices defining triangle $\TriS{z}$ are unchanged, the proof follows as in Lemma~\ref{lem:still-active} Case~\ref{case:u-and-v-not-adj}.\rolf{This follows similarly to the proof of (1) that if $a$ and $b$ are not neighbors on $L_1$ then their triangles do not intersect. (right now our 'proof' is circular).}% the active points before and after in $z$ are unchanged, that is, $A(z) = A(z)'$. Furthermore,
    Consider a vertex $z$ not adjacent to $u$. By the same arguments as in the proof of Lemma~\ref{lem:combined}(1), the vertices defining $\TriS z$ do not change upon deleting $u$ because it does not intersect $\TriS u$. In addition, because their triangles do not intersect, $|A(u) \cap A(z)| = 0$. Therefore, no points are removed from $A(z)$ upon deleting $u$. 

    Lastly, we will show that no points are added to $A(z)$ upon deleting $u$. Assume that there is some point $p$ added to $A(z)$ when we delete $u$. But if $p$ satisfies the conditions of being active for $z$ and $\TriS z$ did not change upon deleting $u$, it should have been active for $z$ before $u$ was deleted, which is a contradiction.

    Because $\TriS z$ and $A(z)$ do not change upon deleting $u$, it must be that $\sens z$ remains the same.
\end{proof}

%Submissions should not exceed six pages (excluding references), must be submitted electronically, and must be prepared using LaTeX; using this template. Authors who feel that additional details are necessary should include a clearly marked appendix, which will be read at the discretion of the Program Committee. Each submission will be refereed by at least three members of the Program Committee. Details on the submission procedure are outlined on the conference website. Six-page papers accepted at CCCG will appear in the electronic proceedings of the conference, provided that they are presented by a speaker at the conference.

\begin{customlemma}{\ref{lem:combined}(5)}
    For adjacent points $(u, v)$ on the hull, $|A(u) \cap A(v)| \leq 1$.
\end{customlemma}
\begin{proof}
We assume the contrary. Let $p \neq p'$ be two points such that $p, p' \in A(u) \cap A(v)$. By the definition of \emph{active} and Lemma~\ref{lem:combined}(3), $p$ and $p'$ must be on the second layer. W.l.o.g. let $(u, v)$ be the clockwise ordering of the points on the first layer. In addition, let $t$ be $u$'s counterclockwise neighbor. 

Say that $p$ is the first point in $A(v)$. Then we have the tangent line $\hp{up}$ that defines $p$. By definition of tangent lines, no point on the second layer can be to the left of $\hp{up}$. But for $p'$ to be active for $v$, then $p'$ must be to the left of $\hp{pv}$. The only way to satisfy both half-planes is for $p'$ to be placed such that $p \in \Tri{t}{p'}{v}$, in which case by Lemma~\ref{lem:combined}(2) $p$ cannot be in $A(u)$, which is a contradiction. %Otherwise $p = p'$ and $|A(u) \cap A(v)| = 1$. 

\end{proof}
\end{document}